\newtheorem{thm}{Theorem}[section]
\newtheorem{tvrzx}[thm]{Proposition}
\newenvironment{tvrz}{\begin{tvrzx}}{\smallskip\end{tvrzx}}
\newtheorem{lemmax}[thm]{Lemma}
\newenvironment{lemma}{\begin{lemmax}}{\smallskip\end{lemmax}}
\newtheorem{theoremx}[thm]{Theorem}
\newenvironment{theorem}{\begin{theoremx}}{\smallskip\end{theoremx}}
\theoremstyle{definition}
\newtheorem{definicex}[thm]{Definition}
\theoremstyle{remark}
\newtheorem{remx}[thm]{Remark}
\newenvironment{rem}{\begin{remx}}{\medskip\end{remx}}
\theoremstyle{definition}
\newtheorem{examplex}[thm]{Example}
\newenvironment{example}{\begin{examplex}}{\medskip\end{examplex}}
\def\R{\mathbb{R}}
\def\C{\mathcal{C}}
\def\T{\mathcal{T}}
\def\<{\langle}
\def\>{\rangle}
\def\~{\widetilde}
\def\^{\wedge}
\def\g{\mathfrak{g}}
\def\h{\mathfrak{h}}
\def\io{\mathit{i}}
\def\F{\mathcal{F}}
\def\G{\mathcal{G}}
\def\B{\mathcal{B}}
\def\A{\mathcal{A}}
\def\D{\mathcal{D}}
\def\fL{\mathbf{L}}
\def\fK{\mathbf{K}}
\def\hfK{\widehat{\mathbf{K}}}
\def\di{\mathsf{d}}
\def\gm{\mathbf{G}}
\def\RS{\mathcal{R}}
\def\fPsi{\mathbf{\Psi}}
\def\cD{\nabla}
\def\hcD{\widehat{\nabla}}
\def\ocD{\overline{\nabla}}
\def\cDM{\nabla^{M}}
\def\cDL{\nabla^{LC}}
\def\cDN{\nabla^{0}}
\def\cif{C^{\infty}(M)}
\newcommand{\bm}[4]{\begin{pmatrix} #1 & #2 \\ #3 & #4 \end{pmatrix}}
\newcommand{\vf}[1]{ \mathfrak{X}^{#1}(M)}
\newcommand{\df}[1]{ \Omega^{#1}(M)}
\newcommand{\vfP}[1]{ \mathfrak{X}^{#1}(P)}
\newcommand{\dfP}[1]{ \Omega^{#1}(P)}
\newcommand{\Li}[1]{ \mathcal{L}_{#1}}
\DeclareMathOperator{\diag}{diag}
\DeclareMathOperator{\BlockDiag}{BlockDiag}
\DeclareMathOperator{\End}{End}
\DeclareMathOperator{\Hom}{Hom}
\DeclareMathOperator{\Aut}{Aut}
\DeclareMathOperator{\Ric}{Ric}
\DeclareMathOperator{\GRic}{GRic}
\DeclareMathOperator{\Div}{div}
\DeclareMathOperator{\hRic}{\widehat{R}ic}
\begin{document}
\begin{flushright}
\today
\end{flushright}
\vspace{0.7cm}
\begin{center}

\baselineskip=13pt {\Large \bf{Heterotic Reduction of Courant Algebroid Connections and Einstein-Hilbert Actions}\\}
 \vskip0.5cm
 {\it dedicated to Satoshi Watamura on the occasion of his 60th birthday}  
 \vskip0.7cm
 {\large{ Branislav Jurčo$^{1}$, Jan Vysoký$^{2,3}$}}\\
 \vskip0.6cm
$^{1}$\textit{Mathematical Institute, Faculty of Mathematics and Physics,
Charles University\\ Prague 186 75, Czech Republic, jurco@karlin.mff.cuni.cz}\\
\vskip0.3cm

$^{2}$\textit{Institute of Mathematics of the Czech Academy of Sciences \\ Žitná 25, Prague 115 67, Czech Republic, vysoky@math.cas.cz}\\
\vskip0.3cm

$^{3}$\textit{Mathematical Sciences Institute, Australian National University \\
Acton ACT 2601, Australia, jan.vysoky@anu.edu.au}\\
\vskip0.5cm
\end{center}

\begin{abstract}We discuss Levi-Civita connections on Courant algebroids.
We define an appropriate generalization of the curvature tensor and compute the corresponding scalar curvatures in the exact and heterotic case, leading to generalized (bosonic) Einstein-Hilbert type of actions known from supergravity.
In particular, we carefully analyze the process of the reduction for the generalized metric, connection, curvature tensor and the scalar curvature.

\end{abstract}

{\textit{Keywords:}} Generalized geometry, exact and heterotic Courant algebroids, reduction of Courant algebroids, Levi-Civita connection torsion, curvature, effective actions, Einstein-Hilbert action

\section{Introduction}
This paper contains a thorough discussion of Riemaniann geometry on exact and heterotic Courant algebroids, i.e., within the framework of generalized geometry as introduced by Hitchin \cite{Hitchin:2004ut} and further developed in \cite{Gualtieri:2003dx, 2005math......8618H,2006CMaPh.265..131H}. The discussion here is in some aspects analogous to the Kaluza-Klein (KK) theory; See \cite{Bailin:1987jd} for a nice review of KK . In the KK theory one starts with a metric on a principal $G$-bundle $P$, with a (compact) Lie group $G$. A $G$-invariant metric on $P$ determines  (and is determined by) a metric on the base manifold $M$, a principal connection on $P$ and a $G$-invariant metric on each fibre $G_x$, smoothly depending on the base point $x$. Let us recall, to $P$ there is the associated Atiyah algebroid $L$ and the connection $A$ corresponds to a choice of splitting of the corresponding Atiyah sequence. One can use the Levi-Civita connection on $P$ and compute the Ricci scalar. The resulting Einstein-Hilbert type of action contains among others the ordinary Einstein-Hilbert action with the pure Yang-Mills theory. Also, let us recall that the KK-reduction naturally incorporates the dilaton.

Here, we modify this in the spirit of the generalized geometry. We can start with the generalized cotangent bundle $TP\oplus T^{\ast}P$ equipped with the structure of an exact Courant algebroid. If the principal action is the so call trivially extended one (and the first Pontryagin class of $P$ vanishes), the exact Courant algebroid structure on $P$ can be reduced to a Courant algebroid structure. In case of a Lie group  $G$, whose Lie algebra ${\g}$ is equipped with an ad-invariant non-degenerate bilinear form $\<\cdot ,\cdot \>_{\g}$ (e.g. compact, semisimple), the resulting Courant algebroid $E'$ is not an exact one, it is a so called heterotic Courant algebroid, its underlying vector bundle is the Whitney sum of the Atiyah algebroid $L$ and the cotangent bundle $T^{\ast}M$. Vice versa, each such a heterotic Courant algebroid comes as a reduction from an exact Courant algebroid on $P$ \cite{Baraglia:2013wua}. Similar statements can be made with respect to the respective generalized metrics. In this paper we thoroughly investigate the reduction on the level of Levi-Civita connections and the corresponding generalized Einstein-Hilbert actions. Roughly speaking, starting with an exact Courant algebroid (with the Dorfman bracket twisted by a closed 3-form $H$),  equipped with a generalized metric $(g,B)$ we arrive (ignoring the dilaton) at the generalized scalar curvature
\[ \RS = \RS(g) - \frac{1}{12} H'_{klm} H'^{klm} \]
with $H'= H+dB$. Similarly, starting with a heterotic Courant algebroid, we arrive (again in the simplest case and ignoring the dilaton and the cosmological constant) at 
\[\RS = \RS(g) - \frac{1}{12} H'_{klm} H'^{klm} + \frac{1}{4} \<F'_{kl}, F'^{kl}\>_{\g},\]
 where $F'$ is the curvature of a connection which is the sum of the starting principal connection $A$ on $P$ and an adjoint bundle valued one form $A'$ on $M$ entering the parametrization of a generalized metric $(g, B, A')$ on a heterotic Courant algebroid, and $H' = H + dB + \dots$. In this paper, among other things, we describe in detail, how the two above actions can be related by the reduction with respect to the trivially extend action of $G$. This relation will appear to be less straightforward as it might seem at the first glance. Let us note that the constants $-1/12$ and $1/4$ are related to the choice of normalizations of the fields ($H$, $B$, $A$, $A'$) as these appear naturally from the generalized geometry of Courant algebroids. E.g., $B$ is exactly the one entering the sum $g+B$.

The relevance of the heterotic Courant algebroids is due to the condition of the triviality of the first Pontryagin class. As noted, e.g., in \cite{2013arXiv1304.4294G,  Baraglia:2013wua}, it is exactly the Green-Schwarz anomaly cancellation condition when the principal bundle $P$ is a fibre product of a Yang-Mills bundle and the (oriented) frame bundle on $M$. Hence, the structure of a heterotic Courant algebroids can be used to naturally incorporate the corresponding $\alpha'$ correction. Related to this, recently, the heterotic effective actions, Green-Schwarz mechanism and the related $\alpha'$ correction have been extensively examined within the double field theory \cite{Bedoya:2014pma, Hohm:2014xsa, Hohm:2015mka, Hohm:2014eba, Marques:2015vua}\footnote{For a general review of double field theory, including discussion of effective action see \cite{Hohm:2013bwa, Aldazabal:2013sca}}. It would be interesting to compare the two approaches. Note also the generalized geometry approach to $\alpha'$ corrections published in \cite{Coimbra:2014qaa}, and generalized connections with applications in DFT in \cite{Jeon:2010rw, Jeon:2011cn}. 

Comparing to other closely related work, let us note the following.
Our definition of the torsion operator is suitable for any local (pre-)Leibniz algebroid \cite{Jurco:2015xra}. In the particular case of a Courant algebroid it can be related to the one defined by Gualtieri \cite{2007arXiv0710.2719G} or equivalently in \cite{alekseevxu}. Similarly, our notion of the curvature operator is suitable for any local (pre-)Leibniz algebroid \cite{Jurco:2015xra}. 

Efforts to construct a well-defined Riemann-like tensors encoding the low energy actions date back to Siegel \cite{Siegel:1993bj, Siegel:1993th}. In the framework of double field theory, generalized Riemann tensors were studied extensively in \cite{Hohm:2010xe} and in \cite{Hohm:2011ex} for heterotic case. In terms of geometry used in the double field theory, the generalized Riemann tensor is defined in \cite{Hohm:2012mf}. Unfortunately, while writing this paper, we were unaware of this work. Its basic idea is very similar to the approach we have taken. In this version of the paper, we have added a new section \ref{sec_dftRiemann} comparing the two approaches. We appreciate that the definition of a generalized Riemann tensor in \cite{Hohm:2012mf} has nicer symmetries (algebraic Bianchi identity) and geometrical properties. Also, it applies to a general Courant algebroid.

Recall that an important role of Courant algebroids and generalized geometry in the geometrization of supergravity was conjectured in the talk of Peter Bouwknegt \cite{talkbouwknegt}. There are many recent developments of similar ideas, see in particular the work of Coimbra, Strickland-Constable and Waldram in \cite{Coimbra:2011nw,Coimbra:2012af}. 

The paper is organized as follows. 

In Section \ref{sec_courant}, we briefly recall basic definitions related to Courant algebroids, in particular the notion of a Courant algebroid connection in \ref{sec_courantcon}. Note that this is not an ordinary vector bundle connection, as it induces the covariant derivative along a general section of the underlying Courant algebroid vector bundle $E$, not only along a vector field. Moreover, the natural compatibility with the Courant metric (pairing) is imposed. We recall the definition of the torsion operator, which has to be modified in order to reflect the non-skew-symmetry of the Dorfman bracket. The non-skew symmetry of the bracket  results in non-tensoriality of the naive torsion operator, the proper modification described here fixes this unpleasant feature,

Section \ref{sec_exact} is devoted to a thorough examination of the Courant algebroid connections in the case of exact Courant algebroids. 

In \ref{sec_genmetric}, we briefly recall all (for our purposes) useful definitions of the generalized metric, in particular the one motivated by physics and the string background $(g,B)$ encoded in the metric $g$ and the $B$-field. Let us note that unlike in some of the published work (e.g., \cite{Baraglia:2013wua} \cite{2013arXiv1304.4294G}), we keep working with an arbitrary isotropic splitting of $E$, not choosing the one by untwisting $B$ from the generalized metric. We do this to keep the explicit track of the $B$-field throughout the calculations. 

In \ref{sec_curvature}, we give a novel definition of the curvature operator, suitable for every Courant algebroid connection on an exact Courant algebroid. It is based on the more general procedure we have described in \cite{Jurco:2015xra}. Let us note that the idea of fixing the tensoriality of the curvature operator is not completely new. In, e.g.,  \cite{Vaisman:2012ke}, a vector bundle connection on $E$ is used to modify the bracket - and consequently use it in definition of the torsion and the curvature operator.

We classify all Courant algebroid connections on an exact Courant algebroid in \ref{sec_LC}. We are aware that a similar thing was done in a more abstract way in \cite{2013arXiv1304.4294G}. However, we take the more pedestrian approach and describe the result in terms of two ordinary tensors on the space-time manifold $M$. 
We define and calculate two different scalar curvatures of the most general Levi-Civita connections in \ref{sec_scalars}.

The rest of Section \ref{sec_exact} is dedicated to some immediate applications of the previous results. In \ref{sec_RicvsGric}, we compare our definition of the Ricci tensor to the generalized Ricci curvature defined in \cite{2013arXiv1304.4294G}. In \ref{sec_dilaton}, we show how the scalar field $\phi$ can be added due to a special choice of the tensor $K$ appearing in our classification, resulting in the scalar curvature containing the terms with dilaton. We compare it to a similar consideration in \cite{2013arXiv1304.4294G}. Finally, one can relate the Levi-Civita connection corresponding to the generalized metric parametrized by fields $(g,B)$ to the one corresponding to the so called background-independent gauge \cite{Seiberg:1999vs}. The tensorial character of the curvature operator allows for a direct relation of the respective scalar curvatures, leading to an elegant geometrical explanation of the correspondence of the two various effective theories (discussed, e.g., in \cite{Blumenhagen:2012nt, Blumenhagen:2013aia}. This is shown in detail in \ref{sec_BIG}. Note that this effective theory is closely related to the one discussed thoroughly in \cite{Asakawa:2015jza}. 

Section \ref{sec_heterotic} essentially generalizes the preceding section to a more general class of Courant algebroids, which we, in accordance to \cite{Baraglia:2013wua}, call heterotic Courant algebroids. 

In \ref{sec_heteroticCourant}, we recall how these heterotic Courant algebroids can be obtained by a reduction from exact Courant algebroids over principal bundles. We follow \cite{Baraglia:2013wua} and \cite{Bursztyn2007726}, \cite{2015LMaPh.tmp...53S}. In \ref{sec_gmred}, we first define a generalized metric on the heterotic Courant algebroid as an involution defining a positive definite fiber-wise metric, and show that for compact Lie groups, such generalized metrics are one-to-one with those obtained by reduction from the exact Courant algebroids in the most natural and straightforward way. The result coincides with the one in \cite{Baraglia:2013wua}. We provide the calculation for an arbitrary splitting, explicitly tracking the corresponding conditions imposed on the $B$-field. Decomposition of a generalized metric gives us the correct twisting map for the heterotic Courant algebroid bracket, an analogue of the $B$-transform in the exact theory. We calculate the resulting twisted bracket in \ref{sec_hettwist}.

In \ref{sec_hetcon}, we provide a complete classification of Levi-Civita Courant algebroid connections for the heterotic case. Again, this should be compared to \cite{2013arXiv1304.4294G}.  However, as in the exact case, we take the more pedestrian approach and describe the result in terms of ordinary tensors on the space-time manifold $M$. The result has more freedom than in the exact case, notably there is no canonical choice of a what one could call a "minimal connection". In \ref{sec_hetcurv}, it turns about that the definition of a curvature operator suitable for heterotic Courant algebroids requires a modification of the correcting term $\fK$, containing a peculiar choice of the factor $\frac{1}{2}$ in order for the resulting scalar curvatures to contain the field $\C$ of the generalized metric only in terms of the twisted bracket. This is similar to the exact case, where $B$ appears only as $dB$ in the result. We then give an example of the Levi-Civita connection and we calculate its scalar curvatures. Similarly to the exact case, we also discuss the addition of the dilaton.

In addition to the previous section, results of the Section \ref{sec_reduction} could be considered as the main achievements of this paper. As the generalized metrics of the heterotic theory are all obtained by the reduction of the relevant generalized metrics on the exact Courant algebroid over the corresponding principal bundle, one can expect that a similar procedure can be used to reduce the Levi-Civita connections.  This is indeed true, as we demonstrate in \ref{sec_conrel}. However, the correspondence of the tensors parametrizing the respective connections is not straightforward at all. One can see it from the provided example. Nevertheless, we were able to find it explicitly for one of the "minimal" connections on the heterotic Courant algebroid. 

More importantly, the correspondence of the connections provides a relation of their respective curvature operators, and consequently also of the two scalar curvatures. This calculation is the main subject of \ref{sec_conreduction}, resulting in the Theorem \ref{thm_scalarreduction}. We apply this to the example of the two corresponding connections of the previous subsection, ending up with an explicit and interesting relation (\ref{eq_connrelfin1}) of the two scalar curvatures. 

Finally, we discuss the dilaton within the reduction framework, this might be surprisingly a bit less straightforward than one would expect.  

As we also remark, starting from pre-Courant algebroids, everything can  immediately generalized be to more general principal bundles with non-vanishing first Pontryagin class. Also, one can include Lorentzian manifolds into all considerations present in the paper. 

In \ref{sec_dftRiemann} we compare in detail the curvature tensor defined here and the one defined in \cite{Hohm:2012mf} within the double field theory.

Although we have focused on principal bundles with semisimple compact structure Lie group, one can easily work out very similar results for some other examples, e.g. torus bundles. On the other hand, one can consider the class of isotropic trivially extended actions, where exact Courant algebroids reduce to exact Courant algebroids.   In this case, the reduction procedure requires one to take the quotient with respect to an isotropic subspace of the Courant metric, which poses a serious problem for the reduction of the positive definite generalized metric. One can thus rightly expect similar problems with with the reduction of corresponding Levi-Civita connections. Possible techniques required to generalize \ref{sec_conreduction} would therefore have to be more involved, and we keep it for future discussions.  
\section{Courant algebroids} \label{sec_courant}
Let us briefly recall some basic definitions. Let $E$ be a vector bundle over a manifold $M$. Let $\rho \in \Hom(E,TM)$ be a smooth vector bundle morphism, called the \emph{anchor}. Let $[\cdot,\cdot]_{E}$ be an $\R$-bilinear bracket on $\Gamma(E)$, the module of smooth sections of $E$. We say that $(E,\rho,[\cdot,\cdot]_{E})$ is a \emph{Leibniz algebroid}\footnote{Also known as Loday algebroid in mathematical literature.}, if the \emph{Leibniz rule}
\begin{equation} \label{def_courant1}
[e,fe']_{E} = f[e,e']_{E} + (\rho(e).f) e', 
\end{equation}
holds for all $e,e' \in \Gamma(E)$ and $f \in \cif$, and the bracket $[\cdot,\cdot]_{E}$ defines a Leibniz algebra on $\Gamma(E)$, i.e.,
\begin{equation} \label{def_courant2}
[e,[e',e'']_{E}]_{E} = [[e,e']_{E},e'']_{E} + [e', [e,e'']_{E}]_{E},
\end{equation}
holds for all $e,e',e'' \in \Gamma(E)$. This condition is called the \emph{Leibniz identity}. It follows from these two axioms that the  anchor is a bracket homomorphism, that is
\begin{equation} \label{eq_anchorhom}
\rho([e,e']_{E}) = [\rho(e),\rho(e')], 
\end{equation}
holds for all $e,e' \in \Gamma(E)$. 

Now, assume that $\<\cdot,\cdot\>_{E}$ is a (not necessarily a positive definite) fiber-wise metric on $E$. One says that $(E,\rho,\<\cdot,\cdot\>_{E},[\cdot,\cdot]_{E})$ is a \emph{Courant algebroid}, if $(E,\rho,[\cdot,\cdot]_{E})$ is a Leibniz algebroid and the following two relations 
\begin{equation} \label{def_courant3}
\< [e,e], e' \>_{E} = \frac{1}{2}\rho(e').\<e,e\>_{E}, 
\end{equation}
\begin{equation} \label{def_courant4}
\rho(e).\<e',e''\>_{E} = \<[e,e']_{E}, e''\>_{E} + \<e', [e,e'']_{E}\>_{E}, 
\end{equation}
hold for all $e,e',e'' \in \Gamma(E)$. 

In order to understand the first of the above conditions, let $g_{E} \in \Hom(E,E^{\ast})$ be the vector bundle isomorphism induced by $\<\cdot,\cdot\>_{E}$. To start, define the map $\di: \cif \rightarrow E^{\ast}$ as $\di{f} := \rho^{T}(df)$, where $\rho^{T} \in \Hom(T^{\ast}M,E)$ is the natural transpose of the anchor $\rho$. Then, one can use $g_{E}$ to induce the map $\D: \cif \rightarrow E$ defined as $\D{f} := g_{E}^{-1}(\di{f})$. Note that this map satisfies the usual Leibniz rule in the form 
\begin{equation} \D{(fg)} = (\D{f}) g + f (\D{g}), \end{equation}
for all $f,g \in \cif$. Using this map, one can rewrite (\ref{def_courant3}) as 
\begin{equation}
[e,e']_{E} = -[e',e]_{E} + \D\<e,e'\>_{E},
\end{equation}
for all $e,e' \in \Gamma(E)$. We see that the combination of $\rho$ and $g_{E}$ is used to control the symmetric part of the bracket. The axiom (\ref{def_courant4}) is a compatibility of the pairing $\<\cdot,\cdot\>_{E}$ with the bracket $[\cdot,\cdot]_{E}$. Courant algebroids should thus be considered as a natural algebroid generalization of quadratic Lie algebras. In particular, they can be recovered as Courant algebroids over $M = \{m\}$. 

One says that $(E,\rho,\<\cdot,\cdot\>_{E},[\cdot,\cdot]_{E})$ is an \emph{exact Courant algebroid}, if the sequence 
\begin{equation} \label{eq_courantSES}
\begin{tikzcd}
0 \arrow[r] & T^{\ast}M \arrow[r, "\rho^{\ast}"] & E \arrow[r,"\rho"] & TM \arrow[r] & 0, 
\end{tikzcd}
\end{equation}
is exact. Here $\rho^{\ast} = g_{E}^{-1} \circ \rho^{T}$, that is $\rho^{\ast}(df) = \D{f}$. Note that $\rho \circ \rho^{\ast} = 0$ holds for a general Courant algebroid. 

It was proved by Ševera in \cite{severaletters} that exact Courant algebroids over $M$ are classified by $H^{3}_{dR}(M)$. 
In particular, every exact Courant algebroid over $M$ has (up to an isomorphism) the following form: $E := TM \oplus T^{\ast}M$, $\rho$ is the canonical projection onto vector fields, $\<\cdot,\cdot\>_{E}$ is the canonical pairing of vector fields and $1$-forms $\<X+\xi,Y+\eta\>_{E}:= \eta(X)+ \xi(Y)$ and $[\cdot,\cdot]_{E}$ is the $H$-twisted Dorfman bracket 
\begin{equation} \label{def_Hdorfman}
[X+\xi,Y+\eta]_{D}^{H} := [X,Y] + \Li{X}\eta  - \io_{Y}d\xi - H(X,Y,\cdot),
\end{equation}
for all $X+\xi, Y+\eta \in \Gamma(E)$. The form $H \in \Omega^{3}(M)$ has to be closed. Let $H' \in \Omega^{3}_{closed}(M)$. Then $[\cdot,\cdot]_{D}^{H}$ and $[\cdot,\cdot]_{D}^{H'}$ are isomorphic iff $[H]_{dR} = [H']_{dR}$. 

\subsection{Courant algebroid connections} \label{sec_courantcon}
Let $(E,\rho,\<\cdot,\cdot\>_{E},[\cdot,\cdot]_{E})$ be a Courant algebroid. We follow the definitions in \cite{alekseevxu} and \cite{2007arXiv0710.2719G}. Let $\cD: \Gamma(E) \times \Gamma(E) \rightarrow \Gamma(E)$ be an $\R$-bilinear map. We say that $\cD$ is a \emph{Courant algebroid connection} on $E$ if 
\begin{equation}
\cD_{fe}e') = f \cD_{e}e', \; \cD_{e}fe' = f \cD_{e}e + (\rho(e).f)e'
\end{equation}
holds for all $e,e' \in \Gamma(E)$ and $f \in \cif$ together with the metric compatibility condition
\begin{equation} \label{def_courantconcomp}
\rho(e).\<e',e''\>_{E} = \< \cD_{e}e', e'' \>_{E} + \<e', \cD_{e}e''\>_{E},
\end{equation}
for all $e,e',e'' \in \Gamma(E)$. We have used the conventional notation $\cD_{e}e' := \cD(e,e')$. We say that $\cD$ is an \emph{induced} Courant algebroid connection, if there is an ordinary vector bundle connection $\cD'$ on $E$, such that $\cD_{e} = \cD'_{\rho(e)}$. Using the standard procedure, we can extend the covariant derivative to the whole tensor algebra $\T(E)$ of the vector bundle $E$. 
Let $g_{E}$ be a fiber-wise metric on $E$. We say that $\cD$ is \emph{metric compatible} with $g_{E}$, if $\cD_{e}g_{E} = 0$, or equivalently
\begin{equation} 
\rho(e).g_{E}(e',e'') = g_{E}(\cD_{e}e',e'') + g_{E}(e',\cD_{e}e''),
\end{equation}
for all $e,e',e'' \in \Gamma(E)$. 

An introduction of torsion operator is less straightforward. We will a definition equivalent to those in \cite{alekseevxu,2007arXiv0710.2719G}. By a \emph{torsion operator} we mean a $\cif$-bilinear map $T: \Gamma(E) \times \Gamma(E) \rightarrow \Gamma(E)$ defined as
\begin{equation} \label{def_torsion}
T(e,e') := \cD_{e}e' - \cD_{e'}e - [e,e']_{E} + \< \cD_{e_{\lambda}}e, e'\>_{E} \cdot g_{E}^{-1}(e^{\lambda}), 
\end{equation}
for all $e,e' \in \Gamma(E)$. Here $(e_{\lambda})_{\lambda=1}^{k}$ is an arbitrary local frame on $E$, and $(e^{\lambda})_{\lambda=1}^{k}$ is the corresponding dual frame on $E^{\ast}$. It is straightforward to check the $\cif$-linearity in both inputs. Moreover, $T$ is in skew-symmetric in $(e,e')$. In fact, there holds even stronger statement. Note that the $3$-form $T_{G}$ in the following lemma is the original definition of torsion according to Gualtieri in \cite{2007arXiv0710.2719G}. 
\begin{lemma} \label{lem_torsgualt}
let $T_{G} \in \T_{3}^{0}(E)$ defined as $T_{G}(e,e',e'') := \< T(e,e'), e''\>_{E}$. Then $T_{G}$ is completely skew-symmetric, that is $T_{G} \in \Omega^{3}(E)$. 
\end{lemma}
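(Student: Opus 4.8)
The plan is to prove complete skew-symmetry of the trilinear form $T_{G}$ by establishing two facts: skew-symmetry in the first two slots $(e,e')$, and skew-symmetry in the last two slots $(e',e'')$. Since the symmetric group $S_{3}$ is generated by the two adjacent transpositions $(1\,2)$ and $(2\,3)$, these two antisymmetries together force full antisymmetry under all of $S_{3}$. The antisymmetry in $(e,e')$ is essentially immediate from the stated $\cif$-bilinearity and the observation (already asserted in the text) that the torsion operator $T$ is skew in its two arguments; pairing with $e''$ preserves this, giving $T_{G}(e,e',e'') = -T_{G}(e',e,e'')$.

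The substantive work is the antisymmetry in the last two arguments, i.e. showing $T_{G}(e,e',e'') = -T_{G}(e,e'',e')$, equivalently $\langle T(e,e'),e''\rangle_{E} + \langle T(e,e''),e'\rangle_{E} = 0$. First I would expand the left-hand side by substituting the definition (\ref{def_torsion}) into both terms and regrouping. The plan is to handle the pieces by type. The connection terms $\langle \cD_{e}e' - \cD_{e'}e, e''\rangle_{E} + \langle \cD_{e}e'' - \cD_{e''}e, e'\rangle_{E}$ should be rearranged so that the metric-compatibility axiom (\ref{def_courantconcomp}) can be applied: the combination $\langle \cD_{e}e',e''\rangle_{E} + \langle \cD_{e}e'',e'\rangle_{E}$ equals $\rho(e).\langle e',e''\rangle_{E}$, so that term is controlled. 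What remains from the connection terms is a combination $-\langle \cD_{e'}e,e''\rangle_{E} - \langle \cD_{e''}e,e'\rangle_{E}$ plus the correction terms. The correction terms $\langle \cD_{e_{\lambda}}e,e'\rangle_{E}\,\langle g_{E}^{-1}(e^{\lambda}),e''\rangle_{E} + \langle \cD_{e_{\lambda}}e,e''\rangle_{E}\,\langle g_{E}^{-1}(e^{\lambda}),e'\rangle_{E}$ simplify using the completeness relation $\langle g_{E}^{-1}(e^{\lambda}),e''\rangle_{E}\,X_{\lambda} = X(e'')$ for the contraction, collapsing each correction term against the dual frame into $\langle \cD_{e'}e,e''\rangle_{E}$ and $\langle \cD_{e''}e,e'\rangle_{E}$ respectively. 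These are designed to cancel exactly the leftover connection pieces identified above. Finally, the bracket terms contribute $-\langle [e,e']_{E},e''\rangle_{E} - \langle [e,e'']_{E},e'\rangle_{E}$, which by the compatibility axiom (\ref{def_courant4}) of the Courant algebroid equals $-\rho(e).\langle e',e''\rangle_{E}$; this cancels against the $+\rho(e).\langle e',e''\rangle_{E}$ produced by the connection terms.

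I expect the main obstacle to be the correction term involving the local frame: one must carefully verify that the frame contraction
\[
\langle \cD_{e_{\lambda}}e, e'\rangle_{E}\, g_{E}^{-1}(e^{\lambda})
\]
indeed reproduces the intended covariant derivative when paired against $e''$, i.e. that $\sum_{\lambda}\langle e^{\lambda}, e''\rangle \, g_{E}^{-1}(e_{\lambda}) $-type resolutions of identity are applied with the correct placement of $g_{E}$ and its inverse. Getting the index bookkeeping right — tracking which pairing is the Courant metric $\langle\cdot,\cdot\rangle_{E}$ versus the dual pairing of $E$ with $E^{\ast}$, and confirming frame-independence of the result — is where a sign or a misplaced $g_{E}$ could slip in. Once those contractions are pinned down, the cancellation is structural: the correction term was engineered precisely so that the non-tensorial and non-skew pieces coming from the failure of the Dorfman bracket to be skew-symmetric are absorbed, and the two metric-compatibility axioms (\ref{def_courantconcomp}) and (\ref{def_courant4}) supply the matching $\rho(e).\langle e',e''\rangle_{E}$ terms with opposite signs. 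Combining the two established antisymmetries then yields $T_{G}\in\Omega^{3}(E)$.
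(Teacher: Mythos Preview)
Your argument is correct and is exactly the standard verification: skew-symmetry in $(e,e')$ is already asserted in the text preceding the lemma, and your computation for skew-symmetry in $(e',e'')$ --- contracting the frame correction term to $\langle \cD_{e''}e,e'\rangle_{E}$, then using the connection compatibility (\ref{def_courantconcomp}) and the bracket compatibility (\ref{def_courant4}) to produce cancelling copies of $\rho(e).\langle e',e''\rangle_{E}$ --- is precisely how this is done. Note, however, that the paper does not actually supply a proof of this lemma; it is stated as a known fact, with attribution to Gualtieri \cite{2007arXiv0710.2719G} and Alekseev--Xu \cite{alekseevxu}, so there is no ``paper's own proof'' to compare against beyond the implicit reference to those sources.
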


\section{Exact theory} \label{sec_exact}
\subsection{Generalized metric} \label{sec_genmetric}
In this section, we assume that $(E,\rho,\<\cdot,\cdot\>_{E},[\cdot,\cdot]_{E})$ is an exact Courant algebroid with the twisted Dorfman bracket (\ref{def_Hdorfman}). There are several equivalent ways to define a generalized metric in this case. Let us start with the following one. We say than $\tau \in \End(E)$ is a \emph{generalized metric} on $E$, if $\tau^{2} = 1$, and the formula
\begin{equation}
\gm_{\tau}(e,e'):= \< \tau(e), e'\>_{E},
\end{equation}
defines a positive definite fiber-wise metric on $E$. This also implies that $\tau$ is symmetric and orthogonal with respect to $\<\cdot,\cdot\>_{E}$. For exact $E$, the signature of $\<\cdot,\cdot\>_{E}$ is $(n,n)$. Denote the group of orthogonal automorphisms of $E$ as $O(E)$, and the corresponding Lie algebra of skew-symmetric maps as $o(E)$. In this case, the $\pm 1$ eigenbundles $V_{\pm}$ of $\tau$ define rank $n$ positive and negative definite subbundles of $E$, such that 
\begin{equation}
E = V_{+} \oplus V_{-}.
\end{equation}
Moreover, one has $V_{-} = V_{+}^{\perp}$. A choice of a maximal rank positive definite subbundle $V_{+}$ is equivalent to a choice of a generalized metric. Given $V_{+}$, define $V_{-} := V_{+}^{\perp}$, and $\tau|_{V_{\pm}} = \pm 1$. Next, let $L \subseteq E$ be a maximally isotropic subbundle. One can identify the orthogonal complement $L^{\perp}$ with the dual bundle $L^{\ast}$, hence $E = L \oplus L^{\ast}$. Because $L$ and $L^{\ast}$ are isotropic, we have $L \cap V_{+} = L^{\ast} \cap V_{+} = 0$. This implies that $V_{+}$ is a graph of a unique vector bundle isomorphism $A \in \Hom(L,L^{\ast})$. In other words, $V_{+}$ has the form
\begin{equation}
V_{+} = \{ \psi + A(\psi) \; | \; \psi \in L \} \subseteq L \oplus L^{\ast}. 
\end{equation}
We can find a unique decomposition $A = g + B$, where $g \in \Gamma(S^{2}L^{\ast})$, and $B \in \Omega^{2}(L)$. It follows that $g$ has to be positive definite fiber-wise metric on $L$. Using the similar arguments and perpendicularity of two subbundles, we get that 
\begin{equation}
V_{-} = \{ \psi + (-g + B)(\psi) \; | \; \psi \in L \} \subseteq L \oplus L^{\ast}. 
\end{equation}
Thus, given a maximally isotropic subbundle $L$, to every generalized metric there exists a unique pair $(g,B)$, where $g$ is a positive definite fiber-wise metric on $L$, and $B \in \Omega^{2}(L)$. Conversely, having the data $(g,B)$, set $V_{+}$ to be as above in order to define a generalized metric. The corresponding fiber-wise metric $\gm_{\tau}$ on $E$ can be written in the formal block form with respect to the splitting $E = L \oplus L^{\ast}$ as 
\begin{equation}
\gm_{\tau} = \bm{g - Bg^{-1}B}{Bg^{-1}}{-g^{-1}B}{g^{-1}}.
\end{equation}
We will drop the subscript $\tau$ in the following. Note that $\gm$ can be decomposed as
\begin{equation}
\gm = \bm{1}{B}{0}{1} \bm{g}{0}{0}{g^{-1}} \bm{1}{0}{-B}{1}.
\end{equation}
For simplicity, we always choose the maximally isotropic subbundle $L = TM$, but keep in mind that everything works in the same way for an arbitrary $L$. 

Let $B \in \df{2}$. Let $\F_{B} \in o(E)$ have the form $\F_{B}(X+\xi) = B(X)$. Taking its exponential, we get the map $e^{B} \in O(E)$, which has the block form
\begin{equation}
e^{B} = \bm{1}{0}{B}{1}.
\end{equation}
The map $e^{B}$ is usually called the \emph{$B$-transform}. We see that $\gm$ can be written as $\gm = (e^{-B})^{T} \G_{E} e^{-B}$, where $\G_{E}$ is the block diagonal fiber-wise metric $\G_{E} = \BlockDiag(g,g^{-1})$. A significant feature is the behavior of the twisted Dorfman bracket under $B$-transform. We have
\begin{equation} \label{eq_DorftransfeB}
e^{B}([e,e']_{D}^{H+dB}) = [e^{B}(e), e^{B}(e')]_{D}^{H}.
\end{equation}
We see that $e^{B}$ is precisely the isomorphism of two twisted Dorfman brackets corresponding to $3$-forms $H$ and $H'$ in the same cohomology class. 

\begin{rem}
We use the following convention for maps induced by $2$-forms and $2$-vectors. For example, we set $B(X) = B(\cdot,X)$, where on the left-hand side $B \in \Hom(TM,T^{\ast}M)$, and on the right-hand side $B \in \df{2}$. Note that the matrix of the map $B$ in any basis coincides with components of the $2$-form in the same basis, that is $\<e_{k},B(e_{l})\> = B(e_{k},e_{l})$.
\end{rem}
\subsection{Definition of the curvature operator} \label{sec_curvature}
Assume that $E$ is an exact Courant algebroid, $E = TM \oplus T^{\ast}M$. We can now proceed to the definition of the curvature operator. This is an example of a more general procedure described in \cite{Jurco:2015xra}. The main idea is to correct the naive curvature operator formula to obtain a tensor with good properties. Define
\begin{equation} \label{def_Riemann}
R(e,e')e'' := \cD_{e}\cD_{e'}e'' - \cD_{e'}\cD_{e}e'' - \cD_{[e,e']_{E}}e'' + \cD_{\fK(e,e')}e'',
\end{equation}
where $\fK$ is defined as 
\begin{equation} \label{def_fK}
\fK(e,e'):= \< \cD_{e_{\lambda}}e, e'\>_{E} \cdot pr_{2}( g_{E}^{-1}(e^{\lambda})).
\end{equation}
Observe that $\fK$ resembles the last term of (\ref{def_torsion}), except for the projection onto $T^{\ast}M$. This modification is necessary in order to establish the tensorial property of $R$. 
\begin{lemma} \label{lem_Riemann}
The operator $R$ defined by (\ref{def_Riemann}) and (\ref{def_fK}) is $\cif$-linear\footnote{Here, following the suggestion of an anonymous referee, we clarify this claim. In particular, we refer to two versions of the paper \cite{Vaisman:2012px}. In the first version of \cite{Vaisman:2012px}, an incorrect definition similar to our (\ref{def_Riemann}) is given. The key for our version to work is the projection $pr_{2}$ in the definition (\ref{def_fK}) of the map $\fK$. Because of that, it satisfies $\fK(e,fe') = f \fK(e,e')$, $\fK(fe,e') = \<e,e'\>_{E} \cdot \D{f}$ and $\rho( \fK(e,e')) = 0$, for all $e,e' \in \Gamma(E)$ and $f \in \cif$. It is precisely the combination of these three properties which can easily be used to prove that $R$ is indeed $\cif$-linear in all inputs. For a more general discussion relevant to Leibniz alegbroids, we refer to our previous paper \cite{Jurco:2015xra}.} in $e,e',e''$. In fact, it is skew-symmetric in $(e,e')$: $R(e,e')e'' = - R(e',e)e''$. Moreover, $R$ satisfies
\begin{equation}
\<R(e,e')f,f'\>_{E} + \<R(e,e')f',f\>_{E} = 0.
\end{equation}
\end{lemma}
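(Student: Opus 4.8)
The plan is to establish the three claims in order of increasing effort, reusing the tensoriality mechanism that is already isolated in the footnote. First I would prove the $\cif$-linearity in each of the three slots. The key tool is the three properties of $\fK$ quoted in the footnote, namely $\fK(e,fe') = f\,\fK(e,e')$, $\fK(fe,e') = \<e,e'\>_{E}\cdot\D{f}$, and $\rho(\fK(e,e'))=0$. Linearity in the last slot $e''$ is the cleanest: replacing $e''$ by $fe''$ and expanding each $\cD$ by the Leibniz rule, the terms carrying derivatives of $f$ organize into $\big(\rho(e).\rho(e').f - \rho(e').\rho(e).f - \rho([e,e']_{E}).f + \rho(\fK(e,e')).f\big)e''$ plus the expected $f\,R(e,e')e''$. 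Using the anchor homomorphism property \eqref{eq_anchorhom}, which gives $\rho([e,e']_{E}) = [\rho(e),\rho(e')]$, the first three derivative terms cancel, and the fourth vanishes because $\rho(\fK(e,e'))=0$. Hence $R(e,e')(fe'') = f\,R(e,e')e''$.

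Next I would treat $\cif$-linearity in the first two slots. Linearity in $e'$, i.e. $R(e,fe')e'' = f\,R(e,e')e''$, is where all four terms of the definition must conspire: $\cD_{e}\cD_{fe'}e''$ produces an extra $(\rho(e).f)\cD_{e'}e''$, the Leibniz identity for the bracket feeds $[e,fe']_{E} = f[e,e']_{E} + (\rho(e).f)e'$ into $\cD_{[e,fe']_{E}}e''$ producing $-(\rho(e).f)\cD_{e'}e''$, and the correction term contributes $\cD_{\fK(e,fe')}e'' = f\,\cD_{\fK(e,e')}e''$ cleanly by the first $\fK$-property. The two $(\rho(e).f)\cD_{e'}e''$ anomalies cancel against each other, leaving $f\,R(e,e')e''$. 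Linearity in $e$ is the subtle one and I expect it to be the main obstacle: here the second $\fK$-property $\fK(fe,e')=\<e,e'\>_{E}\cdot\D{f}$ is essential. Scaling the first input by $f$ generates anomalous derivative-of-$f$ terms from $\cD_{\cD_{fe'}\cdots}$ and from the bracket via \eqref{def_courant3} (the symmetric part $[e,fe']_E$ picking up a $\D$-term through $\D\<\cdot,\cdot\>_E$), and these must be exactly matched by the $\cD_{\fK(fe,e')}e'' = \cD_{\<e,e'\>_{E}\D{f}}e''$ contribution. Verifying this cancellation requires careful bookkeeping of how $\D$ interacts with the connection and the pairing, and this is precisely the place where the naive (uncorrected) curvature fails, as flagged in the footnote regarding \cite{Vaisman:2012px}.

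The skew-symmetry $R(e,e')e'' = -R(e',e)e''$ is then comparatively quick. The first two terms of \eqref{def_Riemann} are manifestly antisymmetric in $(e,e')$. For the remaining two, I would show $-\cD_{[e,e']_{E}}e'' + \cD_{\fK(e,e')}e''$ is antisymmetric by symmetrizing in $(e,e')$: the symmetric part of the bracket is $[e,e']_{E} + [e',e]_{E} = \D\<e,e'\>_{E}$ by the rewriting of \eqref{def_courant3} given in the text, while the symmetric part of $\fK$ satisfies $\fK(e,e') + \fK(e',e) = \<\cD_{e_\lambda}e,e'\>_E\,pr_2(g_E^{-1}(e^\lambda)) + (e\leftrightarrow e')$, which, using metric compatibility \eqref{def_courantconcomp} to write $\<\cD_{e_\lambda}e,e'\>_E + \<\cD_{e_\lambda}e',e\>_E = \rho(e_\lambda).\<e,e'\>_E$, collapses to $pr_2(g_E^{-1}(\di\<e,e'\>_E)) = pr_2(\D\<e,e'\>_E)$. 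Since $\D\<e,e'\>_E = \rho^{\ast}(d\<e,e'\>_E)$ already lies in $T^{\ast}M$, the projection $pr_2$ acts trivially and $\fK(e,e')+\fK(e',e) = \D\<e,e'\>_E$, matching the symmetric part of the bracket so that the two contributions cancel in the symmetrized combination.

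Finally, the metric-compatibility identity $\<R(e,e')f,f'\>_E + \<R(e,e')f',f\>_E = 0$ follows by differentiating \eqref{def_courantconcomp} twice. I would apply the connection-metric compatibility to $\<\cD_{e'}f,f'\>_E$ and $\<f,\cD_{e'}f'\>_E$ under $\rho(e).(\cdot)$, then antisymmetrize in $(e,e')$; the resulting expression reassembles into $\<(\cD_e\cD_{e'} - \cD_{e'}\cD_e)f,f'\>_E + \<f,(\cD_e\cD_{e'}-\cD_{e'}\cD_e)f'\>_E$, while the bracket term contributes via a single application of \eqref{def_courantconcomp} along $[e,e']_E$ and the $\fK$-term along $\fK(e,e')$, both of which are compatible by the same axiom. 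Crucially the correction $\cD_{\fK(e,e')}$ enters symmetrically in $f,f'$ and so respects the identity without extra work. The one point to check is that no leftover $\rho$-derivative of the pairing survives, which is guaranteed because $\rho(\fK(e,e'))=0$ and the anchor-homomorphism cancellation already used for linearity handles the bracket term.
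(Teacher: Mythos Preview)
Your proposal is correct and follows exactly the route the paper indicates: the paper's own proof is the single sentence ``All claims follow by a direct verification using the axioms of Courant algebroids and metric compatibility of $\cD$ and $\<\cdot,\cdot\>_{E}$,'' and you have supplied precisely that verification, organized around the three $\fK$-properties highlighted in the footnote. The only slip is a typo (``$\cD_{\cD_{fe'}\cdots}$'' where you mean the term coming from $\cD_{e'}\cD_{fe}e''$), and you should read the footnote's ``$\fK(fe,e') = \<e,e'\>_{E}\D f$'' as the anomaly $\fK(fe,e') - f\fK(e,e')$; your computation already uses it that way.
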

\begin{proof}
All claims follow by a direct verification using the axioms of Courant algebroids and metric compatibility of $\cD$ and $\<\cdot,\cdot\>_{E}$. 
\end{proof}

Having the curvature operator, or equivalently the generalized Riemann tensor, we can define the generalized Ricci tensor in a usual way as
\begin{equation} \label{def_Ricci}
\Ric(e,e') := \< e^{\lambda}, R(e_{\lambda},e')e \>. 
\end{equation}
At this point, one should compare the definition (\ref{def_Riemann}) with the one introduced in \cite{2007arXiv0710.2719G}, and subsequently used in \cite{2013arXiv1304.4294G}. In these papers, instead of fixing the non-tensoriality of $R$ by adding the term with $\fK$, they note that the naive $R(e,e')$ is tensorial in $e$, if and only if $\<e,e'\>_{E} = 0$. In particular, one can always restrict onto Dirac structure in $E$. However, our $R$ \emph{does not} in general restrict to the naive $R$ for a pair of orthogonal sections. In \cite{2013arXiv1304.4294G}, they choose $e \in V_{+}$ and $e' \in V_{-}$, where $V_{\pm} \subseteq E$ are the subbundles induced by a generalized metric. For connections compatible with $\gm$, those subbundles are invariant, $\cD_{e}(V_{\pm}) \subseteq V_{\pm}$. It follows that in this case $\fK(e,e') = 0$, and our $R$ reduces to the one of \cite{2007arXiv0710.2719G}. Moreover, our tensor $\Ric \in \Gamma(E^{\ast} \otimes E^{\ast})$ does restrict to their $\GRic \in  \Gamma(V^{\ast}_{-} \otimes V^{\ast}_{+})$. However, our curvatures $\RS$ and $\RS_{E}$ can be defined in a more straightforward way using directly the traces of the Ricci tensor. Interestingly, the resulting curvature $\RS$ coincides with the generalized scalar curvature GS, defined using the spinor bundle and the Dirac operator in \cite{2007arXiv0710.2719G}, \cite{2013arXiv1304.4294G}.

Another highly relevant approach of \cite{Hohm:2012mf} will be discussed in detail in \ref{sec_dftRiemann}.
\subsection{Levi-Civita connections on $E$} \label{sec_LC}
Assume that $E = TM \oplus T^{\ast}M$ is equipped with an $H$-twisted Dorfman bracket. Suppose $\gm$ is a generalized metric on $E$, corresponding to a pair $(g,B)$, or equivalently $\gm = (e^{-B})^{T} \G_{E} e^{-B}$, where $\G_{E} = \BlockDiag(g,g^{-1})$. The main goal of this section is to describe the most general torsion-free Courant algebroid connection on $E$, compatible with the generalized metric $\gm$.  It turns out that such a connection is not unique. Despite of this, we refer to such connections as Levi-Civita ones. We start with the following observation.
\begin{lemma}
Connection $\cD$ is a Levi-Civita connection corresponding to the generalized metric $\gm$ and bracket $[\cdot,\cdot]_{D}^{H}$ if and only if the connection $\hcD$ defined as
\begin{equation} \label{eq_cdandhcd}
\cD_{e}e': = e^{B} \hcD_{e^{-B}(e)} e^{-B}(e'), 
\end{equation}
is a Levi-Civita connection corresponding to $\G_{E}$ and the bracket $[\cdot,\cdot]_{D}^{H+dB}$. 
\end{lemma}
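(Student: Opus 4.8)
The plan is to check that each of the three defining properties of a Levi-Civita connection---being a Courant algebroid connection, compatibility with the generalized metric, and vanishing of the torsion---is transported across the conjugation $\cD_{e}e' = e^{B}\hcD_{e^{-B}(e)}e^{-B}(e')$. Everything is driven by two structural facts about the $B$-transform: $e^{B}\in O(E)$ is orthogonal with respect to $\<\cdot,\cdot\>_{E}$, and it is anchor-preserving, $\rho\circ e^{B}=\rho$, since $e^{B}(X+\xi)=X+\xi+B(X)$. Throughout I would abbreviate $\Phi:=e^{-B}$ and use the substitution $\tilde{e}:=\Phi(e)$; because $\Phi$ is an invertible bundle map covering the identity on $M$, the sections $\tilde{e}$ range over all of $\Gamma(E)$ as $e$ does, and $\rho(\tilde{e})=\rho(e)$.

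First I would verify that $\cD\mapsto\hcD$ is a bijection carrying Courant algebroid connections to Courant algebroid connections. The two Leibniz rules transfer at once using $\rho\circ e^{B}=\rho$, and the $\<\cdot,\cdot\>_{E}$-compatibility transfers via the substitution $\tilde{e},\tilde{e}',\tilde{e}''$ together with orthogonality of $e^{B}$: the identity $\rho(e).\<e',e''\>_{E}=\<\cD_{e}e',e''\>_{E}+\<e',\cD_{e}e''\>_{E}$ becomes literally the corresponding identity for $\hcD$ and $\<\cdot,\cdot\>_{E}$. For compatibility with the generalized metric I would use $\gm=(e^{-B})^{T}\G_{E}e^{-B}$ in the form $\gm(e',e'')=\G_{E}(\Phi e',\Phi e'')$; writing out $\rho(e).\gm(e',e'')=\gm(\cD_{e}e',e'')+\gm(e',\cD_{e}e'')$, substituting the definition of $\hcD$ and using $\rho(e)=\rho(\Phi e)$, the whole condition collapses exactly to $\hcD_{\tilde{e}}\G_{E}=0$, and the same substitution yields the converse.

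The crux---and the step I expect to be the main obstacle---is the torsion, since this is where the twist of the bracket (and hence the replacement of $H$ by $H+dB$) enters. Rewriting (\ref{eq_DorftransfeB}) as $[e,e']_{D}^{H}=e^{B}[\Phi(e),\Phi(e')]_{D}^{H+dB}$, I would aim at the intertwining identity $T^{\cD}(e,e')=e^{B}\,\widehat{T}(\Phi(e),\Phi(e'))$, where $T^{\cD}$ is the torsion (\ref{def_torsion}) of $\cD$ with the $H$-bracket and $\widehat{T}$ that of $\hcD$ with the $(H+dB)$-bracket. The first three terms of (\ref{def_torsion}) factor out $e^{B}$ cleanly once the bracket is rewritten. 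The delicate piece is the trace correction $\<\cD_{e_{\lambda}}e,e'\>_{E}\cdot g_{E}^{-1}(e^{\lambda})$: I would first record its frame-independent characterization, namely that it is the unique section $S(e,e')$ with $\<S(e,e'),v\>_{E}=\<\cD_{v}e,e'\>_{E}$ for all $v\in\Gamma(E)$, and then use orthogonality of $e^{B}$ to compute $\<e^{B}\widehat{S}(\Phi e,\Phi e'),v\>_{E}=\<\widehat{S}(\Phi e,\Phi e'),\Phi v\>_{E}=\<\hcD_{\Phi v}\Phi e,\Phi e'\>_{E}=\<\cD_{v}e,e'\>_{E}=\<S(e,e'),v\>_{E}$, where the middle equality uses $\hcD_{\Phi v}\Phi e=\Phi\cD_{v}e$. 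Since this holds for every $v$, the two correction terms are intertwined by $e^{B}$, which completes the intertwining identity and shows $T^{\cD}=0$ if and only if $\widehat{T}=0$. Assembling the three parts then gives the claimed equivalence in both directions.
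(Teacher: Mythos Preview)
Your proof is correct and follows the same approach as the paper: establishing that metric compatibility transfers via orthogonality of $e^{B}$ and that the torsions are intertwined by $e^{B}$ thanks to (\ref{eq_DorftransfeB}). The paper's proof is much terser---it simply asserts the torsion intertwining identity $T(e,e') = e^{B}\widehat{T}(e^{-B}(e),e^{-B}(e'))$ without discussing the trace correction term---so your frame-independent characterization of that term is a useful elaboration rather than a deviation.
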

\begin{proof}
Let $\cD$ be a Levi-Civita connection corresponding to $\gm$ and $[\cdot,\cdot]_{D}^{H}$. It is easy to see that $\hcD$ is metric compatible with $\G_{E}$ and $\<\cdot,\cdot\>_{E}$. Let $\widehat{T}$ be the torsion operator corresponding to $\hcD$ and the bracket $[\cdot,\cdot]_{D}^{H+dB}$. It follows from (\ref{eq_DorftransfeB}) that 
\begin{equation}
T(e,e') = e^{B} (\widehat{T}( e^{-B}(e), e^{-B}(e'))). 
\end{equation}
This proves that $T = 0$ iff $\widehat{T} = 0$. 
\end{proof}
This lemma allows us to simplify our problem to the following one: To find all Levi-Civita connections with respect to the block diagonal generalized metric $\G_{E'}$ and the bracket $[\cdot,\cdot]_{D}^{H'}$, where the closed $3$-form $H'$ is $H' = H + dB$. It is not difficult to see that $\hcD$ is metric compatible with $\<\cdot,\cdot\>_{E}$ and $\G_{E}$, if and only it has have the block form
\begin{equation} \label{eq_hcDcompatible}
\hcD_{X} = \bm{\cDM_{X}}{g^{-1}C_{X}(\cdot,g^{-1}(\star))}{C_{X}(\cdot,\star)}{\cDM_{X}}, \; \hcD_{\xi} = \bm{A_{\xi}(\star)}{g^{-1}V_{\xi}(\cdot,g^{-1}(\star))}{V_{\xi}(\cdot,\star)}{-A_{\xi}^{T}(\star)},
\end{equation}
where $C_{X}, V_{\xi} \in \df{2}$, $\cDM$ is an ordinary linear connection on the manifold $M$ compatible with the metric $g$, and $A_{\xi} \in \End(TM)$ is skew-symmetric with respect to the metric $g$: 
\begin{align}
g( A_{\xi}(Y),Z) + g(Y, A_{\xi}(Z)) = 0.
\end{align}
All objects are $\cif$-linear in $X,\xi$ and $\star$ indicates the input. Equivalently, we can write
\begin{align}
\hcD_{X}(Z+\zeta) &= \cDM_{X}Z + g^{-1}C_{X}(\cdot,g^{-1}(\zeta)) + C_{X}(\cdot,Z) + \cDM_{X}\zeta, \\
\hcD_{\xi}(Z+\zeta) &= A_{\xi}(Z) + g^{-1}V_{\xi}(\cdot,g^{-1}(\zeta)) + V_{\xi}(\cdot,Z) - A_{\xi}^{T}(\zeta). 
\end{align}
Our goal is to determine $C,V,\cDM$ and $A$ in order to define a torsion-free connection. Plugging the expressions (\ref{eq_hcDcompatible}) into the condition $\widehat{T}(e,e') = 0$ gives a set of four independent equations: 
\begin{align}
\label{eq_tf1} \cDM_{X}Y - \cDM_{Y}X - [X,Y] + V^{k}(X,Y)\partial_{k} &= 0, \\
\label{eq_tf2} C_{X}(Z,Y) - C_{Y}(Z,X) - C_{Z}(X,Y) + H'(X,Y,Z) &= 0, \\
\label{eq_tf3} C_{X}( g^{-1}(\xi), g^{-1}(\eta)) + \< A_{\xi}(X), \eta\> - \<A_{\eta}(X),\xi\> &= 0, \\
\label{eq_tf4} V_{\xi}(g^{-1}(\zeta),g^{-1}(\eta)) - V_{\eta}(g^{-1}(\zeta), g^{-1}(\xi)) - V_{\zeta}(g^{-1}(\xi), g^{-1}(\eta)) &= 0.
\end{align}
All conditions are supposed to hold for all vector fields and $1$-forms on $M$. 
To proceed further, note that (\ref{eq_tf1}) shows that $V_{\xi}(X,Y) = \< T^{M}(X,Y), \xi \>$, where $T^{M}$ is the torsion operator of the linear connection $\cDM$. One can plug this expression into (\ref{eq_tf4}) in order to obtain the condition
\begin{equation} \label{eq_tf5}
g(T^{M}(Z,Y),X) - g(T^{M}(Z,X),Y) - g(T^{M}(X,Y),Z) = 0.
\end{equation}
Recall that the contortion tensor $K$ for metric compatible connection is defined as 
\begin{equation}
K(X,Y,Z) := \frac{1}{2} \{ g(T^{M}(X,Y),Z) + g(T^{M}(Z,X),Y) - g(T^{M}(Y,Z),X) \}.
\end{equation}
Using the condition (\ref{eq_tf5}), this gives $K(X,Y,Z) = -g(X,T^{M}(Y,Z))$. Rewriting (\ref{eq_tf5}) using the contortion tensor now gives the equation
\begin{equation} \label{eq_tf6} K(X,Y,Z) + cyclic(X,Y,Z) = 0. \end{equation}
The contortion tensor is by definition skew-symmetric in last two inputs. The condition (\ref{eq_tf6}) thus says that complete skew-symmetrization $K_{a}$ of $K$ vanishes. Now note that we can write
\begin{align}
V_{\xi}(X,Y) &= - K(g^{-1}(\xi),X,Y), \\
\cDM_{X}Y &= \cDL_{X}Y + g^{-1}K(X,Y,\cdot),
\end{align}
where $\cDL$ is the Levi-Civita connection on $M$ corresponding to the metric $g$. 	The solution to the other two equations is similar. Define the tensor $Q$ as 
\begin{equation}
Q(\xi,\eta,\zeta) := \< \eta, A_{\xi}(g^{-1}(\zeta)) \> = g^{-1}( A_{\xi}^{T}(\eta),\zeta). 
\end{equation}
Because $A_{\xi}$ is supposed to be skew-symmetric with respect to $g$, $Q(\xi,\eta,\zeta)$ has to be skew-symmetric in $(\eta,\zeta)$. Equation (\ref{eq_tf3}) then implies
\begin{equation}
C_{X}(Y,Z) = Q( g(Z),g(Y),g(X)) - Q(g(Y),g(Z),g(X)). 
\end{equation}
One can now plug this into (\ref{eq_tf2}) in order to obtain 
\begin{equation}
Q(g(X),g(Y),g(Z)) + cyclic(X,Y,Z) = - \frac{1}{2} H'(X,Y,Z). 
\end{equation}
This determines the complete skew-symmetrization of $Q$. We can thus write 
\begin{equation}
Q(\xi,\eta,\zeta) = - \frac{1}{6}H'(g^{-1}(\xi),g^{-1}(\eta),g^{-1}(\zeta)) + J(\xi,\eta,\zeta),
\end{equation}
where $J \in \vf{1} \otimes \vf{2}$ satisfies $J_{a} = 0$. One can now rewrite the remaining tensor fields using $H'$ and $J$ to obtain 
\begin{align}
A_{\xi}(Y) &= \frac{1}{6} H'(g^{-1}(\xi),Y,\cdot) - J(\xi,g(Y),\cdot), \\
C_{X}(Y,Z) &= \frac{1}{3} H'(X,Y,Z) + J(g(X),g(Y),g(Z)). 
\end{align}
We can summarize the above discussion in the form of a theorem
\begin{theorem} \label{thm_generalLC}
A Courant algebroid connection $\hcD$ is a Levi-Civita connection with respect to the generalized metric $\G_{E}$ and $H'$-twisted Dorfman bracket $[\cdot,\cdot]_{D}^{H'}$, if and only if it has the form 
\begin{align}
\label{eq_hcDexpl1} \hcD_{X} &= \bm{\cDL_{X} + g^{-1}K(X,\star,\cdot)}{-\frac{1}{3} g^{-1}H'(X,g^{-1}(\star),\cdot) - J(g(X),\star,\cdot)}{-\frac{1}{3}H'(X,\star,\cdot) - gJ(g(X),g(\star),\cdot)}{\cDL_{X} + K(X,g^{-1}(\star),\cdot)}, \\
\label{eq_hcDexpl2} \hcD_{\xi} & = \bm{ \frac{1}{6} g^{-1}H'(g^{-1}(\xi),\star,\cdot) - J(\xi,g(\star),\cdot)}{ g^{-1}K(g^{-1}(\xi),g^{-1}(\star),\cdot)}{K(g^{-1}(\xi),\star,\cdot)}{ \frac{1}{6} H'(g^{-1}(\xi),g^{-1}(\star),\cdot) - gJ(\xi,\star,\cdot)},
\end{align}
where $K \in \df{1} \otimes \df{2}$ and  $J \in \vf{} \otimes \vf{2}$ satisfy $K_{a} = J_{a} = 0$. 
\end{theorem}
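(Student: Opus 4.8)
The plan is to prove the equivalence by reducing the two Levi-Civita requirements—metric compatibility (with both $\<\cdot,\cdot\>_E$ and $\G_E$) and vanishing of the torsion operator—to a closed system of tensorial equations on $M$, and then solving that system explicitly. Since every reduction step will be an equivalence, proving the "only if" direction simultaneously yields the "if" direction.

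First I would establish the general block form (\ref{eq_hcDcompatible}) from the compatibility conditions alone, before imposing torsion-freeness. Here the anchor $\rho(X+\xi)=X$ is decisive: the Leibniz rule only differentiates along the $TM$-component, so $\hcD_X$ must contain a genuine first-order operator on the diagonal while $\hcD_\xi$ is purely tensorial (zeroth order). Compatibility with the off-diagonal pairing $\<\cdot,\cdot\>_E$ together with compatibility with $\G_E=\BlockDiag(g,g^{-1})$ is then a fiberwise linear-algebra constraint on the four blocks: it forces the diagonal of $\hcD_X$ to be a single $g$-compatible linear connection $\cDM$, the diagonal of $\hcD_\xi$ to be a $g$-skew endomorphism $A_\xi$, and the off-diagonal blocks to be governed by the $2$-forms $C_X$ and $V_\xi$. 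This is exactly the biconditional asserted in the text preceding the theorem.

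Next I would impose torsion-freeness. Substituting (\ref{eq_hcDcompatible}) into $\widehat T(e,e')=0$ and decomposing along $E=TM\oplus T^{\ast}M$ produces the four independent equations (\ref{eq_tf1})--(\ref{eq_tf4}), which I would then solve in the order dictated by their coupling. Equation (\ref{eq_tf1}) identifies $V_\xi(X,Y)$ with the torsion $T^{M}$ of $\cDM$; feeding this into (\ref{eq_tf4}) yields the symmetry constraint (\ref{eq_tf5}), which in contortion language is precisely $K_a=0$, and simultaneously expresses $\cDM$ as $\cDL$ plus a $K$-correction. In parallel, (\ref{eq_tf3}) expresses $C_X$ through the tensor $Q$ built from $A_\xi$, after which (\ref{eq_tf2}) fixes the totally skew part of $Q$ to be $-\tfrac16 H'$, leaving a free remainder $J$ with $J_a=0$. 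Collecting the resulting $\cDM$, $A_\xi$, $C_X$, $V_\xi$ and inserting them back into (\ref{eq_hcDcompatible}) gives exactly (\ref{eq_hcDexpl1})--(\ref{eq_hcDexpl2}). Because each step is an equivalence—the constraints $K_a=J_a=0$ are nothing but the content of the four torsion equations—any $\hcD$ of the stated form is automatically metric-compatible (by construction from (\ref{eq_hcDcompatible})) and torsion-free, which settles the converse.

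I expect the main obstacle to be the bookkeeping in the block-form computation of $\widehat T$ and in the associated musical manipulations: keeping the raising and lowering by $g$, the dual-frame contractions in the definition (\ref{def_torsion}) of $T$, the projections, and the skew-symmetry conventions all consistent so that the torsion decouples cleanly into the four equations in the right triangular order. The only genuinely non-routine conceptual point is recognizing that the residual freedom is correctly parametrized by the two tensors $K$ and $J$, each subject only to the single condition of vanishing total antisymmetrization; this is exactly what exhibits the non-uniqueness of the Levi-Civita connection in this generalized setting, in contrast to the classical Riemannian case.
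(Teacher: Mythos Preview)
Your proposal is correct and follows essentially the same route as the paper: derive the block form (\ref{eq_hcDcompatible}) from compatibility with $\<\cdot,\cdot\>_{E}$ and $\G_{E}$, plug it into $\widehat T=0$ to obtain the four equations (\ref{eq_tf1})--(\ref{eq_tf4}), then solve them in the same order, introducing the contortion $K$ from (\ref{eq_tf1}) and (\ref{eq_tf4}) and the tensor $Q$ (hence $J$) from (\ref{eq_tf3}) and (\ref{eq_tf2}), with the constraints $K_{a}=J_{a}=0$ emerging exactly as you describe. Your explicit remark that each step is an equivalence, so that the converse is automatic, matches the paper's implicit ``if and only if'' structure.
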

We see that the space of Levi-Civita connections for a given generalized metric is an infinite-dimensional affine space over the $\cif$-module of rank $\frac{2}{3}(n^{3} - n)$. Also note that no Levi-Civita connection is in general (whenever $H' \neq 0$) induced by an ordinary connection on vector bundle $E$, that is always $\hcD_{\xi} \neq 0$, and consequently also $\cD_{\xi} \neq 0$.
\subsection{Scalar curvatures of Levi-Civita connections} \label{sec_scalars}
Let $\cD$ be the Levi-Civita Courant algebroid connection corresponding to the generalized metric $\gm$ and the bracket $[\cdot,\cdot]_{D}^{H}$. We assume that the connection $\hcD$ corresponding to $\cD$ via (\ref{eq_cdandhcd}) has the form as in Theorem \ref{thm_generalLC}. We can calculate the curvature operator (\ref{def_Riemann}) of $\cD$, and the corresponding Ricci tensor (\ref{def_Ricci}).   The ultimate goal of this section is to arrive to a pair $(\RS,\RS_{E})$ of scalar curvatures, defined as
\begin{equation} \label{def_scalars}
\RS := \Ric( \gm^{-1}(e^{\lambda}), e_{\lambda}), \; \RS_{E} = \Ric( g_{E}^{-1}(e^{\lambda}), e_{\lambda}). 
\end{equation}
To start, note that we have
\begin{equation} \label{eq_relRhatR}
R(e,e')e'' = e^{B} [ \widehat{R}(e^{-B}(e), e^{-B}(e'))( e^{-B}(e''))], 
\end{equation}
where the operator $\widehat{R}$ has the form
\begin{equation} \label{eq_hRiemann}
\widehat{R}(e,e')e'' := \hcD_{e}\hcD_{e'}e'' - \hcD_{e'}\hcD_{e}e'' - \hcD_{[e,e']_{D}^{H'}}e'' + \hcD_{\hfK(e,e')} e''.
\end{equation}
The map $\hfK$ is defined as 
\begin{equation} \label{eq_hfK}
\hfK(e,e') := \< \hcD_{e_{\lambda}}e, e'\>_{E} \cdot (e^{-B} \circ pr_{2} \circ e^{B}) (g_{E}^{-1}(e^{\lambda})). 
\end{equation}
Note that $\hfK$ and consequently also $\widehat{R}$ depend explicitly on $B$. Define the tensor $\hRic$ using $\widehat{R}$ as 
\begin{equation}
\hRic(e,e') := \< e^{\lambda}, \widehat{R}(e_{\lambda},e')e \>. 
\end{equation}
From the definition, we obtain  $\Ric(e,e') = \hRic(e^{-B}(e), e^{-B}(e'))$, and thus 
\begin{equation} \label{eq_scalrstwist}
\RS = \hRic( \G_{E}^{-1}(e^{\lambda}),e_{\lambda}), \; \RS_{E} = \hRic(g_{E}^{-1}(e^{\lambda}), e_{\lambda}). 
\end{equation}
Explicit formulas for $\widehat{R}$ and $\hRic$ can be now calculated from (\ref{eq_hcDexpl1}, \ref{eq_hcDexpl2}). We are interested merely in the scalar functions $(\RS, \RS_{E})$. We state the final result in the form of a theorem. 

\begin{theorem} \label{thm_scalarcurvatures}
Let $\cD$ be a Levi-Civita connection corresponding to the generalized metric $\gm$ and the bracket $[\cdot,\cdot]_{D}^{H}$. Let $H' = H + dB$, and let $\hcD$ be the connection (\ref{eq_cdandhcd}) having the explicit form (\ref{eq_hcDexpl1}, \ref{eq_hcDexpl2}). Let $\RS$ and $\RS_{E}$ be the scalar curvatures (\ref{def_scalars}). Then 
\begin{align}
\RS &= \RS(g) - \frac{1}{12} H'_{klm} H'^{klm} + 4 \Div( K') - 4 \rVert K' \rVert_{g}^{2} - 4 \rVert J' \rVert_{g}^{2}, \\
\RS_{E} &= - 4 \Div( J' ) + 8 \< J', K' \>,
\end{align}
where $\RS(g)$ is the Levi-Civita Ricci scalar of $g$, $J' \in \vf{}$ is defined as $J' := J(dy^{k}, g(\partial_{k}), \cdot)$, $K' \in \df{1}$ is $K' = K( g^{-1}(dy^{k}), \partial_{k}, \cdot)$, $\rVert \cdot \rVert_{g}$ are the $g$-norms of vector and covector fields, and $\Div$ is a covariant divergence induced by the Levi-Civita connection $\cDL$. 
\end{theorem}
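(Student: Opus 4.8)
The plan is to do everything in the hatted picture, since (\ref{eq_scalrstwist}) has already reduced the two scalars to the single-tensor traces $\RS = \hRic(\G_{E}^{-1}(e^{\lambda}),e_{\lambda})$ and $\RS_{E} = \hRic(g_{E}^{-1}(e^{\lambda}),e_{\lambda})$, with $\hRic(e,e') = \langle e^{\mu}, \widehat{R}(e_{\mu},e')e\rangle$. The whole proof is then the insertion of the explicit block forms (\ref{eq_hcDexpl1}), (\ref{eq_hcDexpl2}) into the curvature operator (\ref{eq_hRiemann}) followed by two contractions. I would fix the coordinate frame $e_{\lambda} = (\partial_{k},dy^{k})$ with its dual frame in $E^{\ast}$, so that both double traces become finite sums of ordinary tensor contractions on $M$. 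Note that the block forms of $\hcD$ contain $B$ only through $H' = H + dB$; the only explicit $B$ in the entire computation sits inside $\hfK$.

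First I would split $\hcD = \hcD^{(0)} + \Phi$, where $\hcD^{(0)}$ is the ``minimal'' connection obtained by setting $K = J = 0$ (built from $\cDL$ and $H'$ alone) and $\Phi$ is the $\cif$-linear remainder carrying $K$ and $J$. I expect $\hcD^{(0)}$ to yield exactly $\RS(g) - \tfrac{1}{12}H'_{klm}H'^{klm}$ for $\RS$ and $0$ for $\RS_{E}$: the $\RS(g)$ piece comes from the diagonal $\cDL$ blocks reproducing the ordinary Riemann tensor, while the $H'^{2}$ piece is assembled from products of the off-diagonal $\pm\tfrac13 H'$ and $\tfrac16 H'$ blocks together with the $H'$-term of the Dorfman bracket and the $\hfK^{(0)}$ correction. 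Pinning the coefficient $-\tfrac{1}{12}$ exactly is the first delicate point of bookkeeping, since it requires combining several contributions weighted by the competing factors $\tfrac13$ and $\tfrac16$, while the pieces linear in $H'$ must cancel by the total skew-symmetry of $H'$.

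Next I would compute the correction from $\Phi$. Because $\Phi$ is tensorial, $\widehat{R}$ acquires the usual $\cDL\Phi$ and $\Phi\Phi$ terms plus the $\Phi$-dependent part of $\hfK$. After tracing, the $\cDL K$ and $\cDL J$ pieces should collapse into the total derivatives $4\Div(K')$ and $-4\Div(J')$, and the quadratic pieces into $-4\lVert K'\rVert_{g}^{2}$, $-4\lVert J'\rVert_{g}^{2}$ and $8\langle J',K'\rangle$. At every stage I would invoke the constraints $K_{a} = J_{a} = 0$ to discard the completely skew-symmetric parts of $K$ and $J$, since these are precisely the combinations that would otherwise contaminate the traces with spurious $K$-$H'$ and $J$-$H'$ cross terms; after their removal only the trace parts $K' = K(g^{-1}(dy^{k}),\partial_{k},\cdot)$ and $J' = J(dy^{k},g(\partial_{k}),\cdot)$ survive.

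The step I expect to be the main obstacle is the correction term $\hfK$. It is built from $e^{-B}\circ pr_{2}\circ e^{B}$, whose action on $X+\xi$ produces $B(X)+\xi \in T^{\ast}M$, so $\hfK$ carries an explicit $B$ through the vector-field members of the frame; yet the final scalars contain $B$ only via $H'$. The $B$-linear contribution of $\hfK$ to each of the two traces must therefore cancel identically, and establishing this cancellation — simultaneously with verifying that the remaining $\hfK$ terms feed exactly into the divergence and norm terms rather than generating extra cross terms — is where the calculation is most error-prone. I would handle it by evaluating $\hfK(e_{\mu},e')$ on the coordinate frame explicitly, isolating its $T^{\ast}M$-valued output, feeding it through $\hcD_{\xi}$, and grouping the $B$-dependent pieces so that they can be shown to vanish before the final contraction is performed.
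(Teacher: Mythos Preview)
Your plan is essentially the paper's own approach: the paper simply states that ``explicit formulas for $\widehat{R}$ and $\hRic$ can be now calculated from (\ref{eq_hcDexpl1}, \ref{eq_hcDexpl2})'' and records the result, so the proof is exactly the direct insertion-and-contraction computation you describe, carried out in the hatted picture via (\ref{eq_scalrstwist}). Your additional structuring into the minimal $K=J=0$ piece plus a tensorial correction $\Phi$ is a reasonable way to organise the bookkeeping, and the paper does not do anything different in spirit.

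One remark on the point you flag as the main obstacle. The explicit $B$ sitting in $\hfK$ through $e^{-B}\circ pr_{2}\circ e^{B}$ does indeed cancel in the two traces, but the paper does not verify this within the proof of the theorem; it only explains the mechanism later, in Section~\ref{sec_dftRiemann}, via the relation (\ref{eq_Ricirel}) to the Hohm--Zwiebach tensor $R_{HZ}$, which is manifestly $B$-free after untwisting. If you want a self-contained argument at this stage you will have to check the cancellation by hand as you outline; alternatively you can invoke Proposition~\ref{tvrz_Riemannrel} up front, compute the scalars from $\Ric_{HZ}$ (which has no explicit $B$), and halve the result. Either route is legitimate, and both confirm the formulas claimed.
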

\begin{rem}
Let us note that in the entire process, it was not necessary for $g$ to be positive definite. The calculation thus works for $g$ of any signature, as long as $g$ is non-degenerate.
\end{rem}

\subsection{Comparing $\Ric$ to $\GRic$ of  \cite{2013arXiv1304.4294G} } \label{sec_RicvsGric}
At this point, we can compare the conditions imposed by the vanishing of the generalized Ricci tensor $\GRic$ as it was discussed in \cite{2013arXiv1304.4294G} to the properties of the tensor $\Ric$ introduced here. Note that $\GRic \in \Gamma(V_{-}^{\ast} \otimes V_{+}^{\ast})$, and we get
\begin{equation}
\GRic(e_{-},e_{+}) = \Ric( e_{+}, e_{-}),
\end{equation}
where $e_{-} \in \Gamma(V_{-})$ and $e_{+} \in \Gamma(V_{+})$. In the above formula, the interchange of $e_{-}$ with $e_{+}$ follows from the opposite convention to define the Ricci tensor. They prove that $\GRic = 0$ gives the part of the equations of motion for type II supergravity. Let us now show that $\GRic = 0$ does not imply $\Ric = 0$. We will use the simplest $J = K = 0$ example. 
Define the four tensors $\Ric_{\pm \pm} \in \df{1} \otimes \df{1}$ as 
\begin{equation} \Ric_{\pm \pm}(X,Y) := \Ric( \fPsi_{\pm}(X), \fPsi_{\pm}(Y)), \end{equation}
where $\fPsi_{\pm}$ are the generalized metric induced isomorphisms $\fPsi \in \Hom(TM,V_{\pm})$ having the form
\begin{equation} \label{def_fPsipm}
\fPsi_{\pm}(X) = X + (\pm g + B)(X).
\end{equation}
As $\fPsi_{\pm}$ are isomorphisms, the condition $\GRic = 0$ is equivalent to $\Ric_{+-} = 0$. One finds
\begin{align}
\Ric_{+-}(X,Y) &= \Ric^{LC}(X,Y) - \frac{1}{4} (H' \star H')(X,Y) - \frac{1}{2} (\cD^{LC}H')(X,Y), \\
\Ric_{-+}(X,Y) &= \Ric^{LC}(X,Y) - \frac{1}{4} (H' \star H')(X,Y) + \frac{1}{2} (\cD^{LC}H')(X,Y),
\end{align}
where $(H' \star H')(X,Y) := H'(\partial_{k},\partial_{l},X) H'(e^{k},e^{l},Y)$, and $(\cD^{LC}H)(X,Y) = (\cD^{LC}_{\partial_{k}}H)(X,Y,e^{k})$, and similarly
\begin{align}
\Ric_{++}(X,Y) &= \Ric^{LC}(X,Y) - \frac{1}{12} (H' \star H')(X,Y) + \frac{1}{6} (\cD^{LC}H')(X,Y) \\
& + \frac{1}{18} H'(g^{-1}(dy^{l}),\partial_{k},Y) H'(g^{-1}B(\partial_{l}),X,g^{-1}(dy^{k})), \nonumber\\
\Ric_{--}(X,Y) &= \Ric^{LC}(X,Y) - \frac{1}{12} (H' \star H')(X,Y) - \frac{1}{6} (\cD^{LC}H')(X,Y) \\
& - \frac{1}{18} H'(g^{-1}(dy^{l}),\partial_{k},Y) H'(g^{-1}B(\partial_{l}),X,g^{-1}(dy^{k})). \nonumber
\end{align}
The condition $\GRic(X,Y) = 0$ implies $\Ric_{+-}(X,Y) = 0$ and consequently $\Ric_{-+}(X,Y) = 0$. Clearly $\GRic(X,Y) = 0$ is not enough for $\Ric_{++}$ and $\Ric_{--}$ to vanish, as this would give us two additional conditions
\begin{align}
\Ric^{LC}(X,Y) = (H' \star H')(X,Y) &= 0, \\
H'(e^{l},\partial_{k},Y) H'(g^{-1}B(\partial_{l}),X,e^{k}) &= 0.
\end{align}
We see that $\GRic = 0$ only implies that $\Ric$ is block-diagonal with respect to the decomposition $E = V_{+} \oplus V_{-}$ induced by the generalized metric $\gm$. Finally, note that $\RS$ and $\RS_{E}$ use exactly the block diagonal components $\Ric_{++}$ and $\Ric_{--}$. One finds that
\begin{align}
\RS &= \frac{1}{2}\Ric_{++}(g^{-1}(dy^{k}),\partial_{k}) + \frac{1}{2} \Ric_{--}(g^{-1}(dy^{k}),\partial_{k}) \\ 
\RS_{E} &= \frac{1}{2} \Ric_{++}(g^{-1}(dy^{k}),\partial_{k}) - \frac{1}{2} \Ric_{--}(g^{-1}(dy^{k}),\partial_{k}).
\end{align}
\subsection{Adding the dilaton} \label{sec_dilaton}
In this section, we show how to add the scalar field $\phi \in C^{\infty}(M)$ into the picture. Also, we relate it to the procedure used in \cite{2013arXiv1304.4294G}. First note that to arbitrary $1$-form $\varphi \in \df{1}$, one can always find $K \in \df{1} \otimes \df{2}$, such that $K' = \frac{1}{6} (1 - \dim M) \cdot \varphi$. The prefactor is conventional, and shows that we have to assume that $\dim{M} > 1$. Choose
\begin{equation} \label{eq_Kdilaton}
K(X,Y,Z) := \frac{1}{6}\<\varphi,Y\> \cdot g(X,Z) - \frac{1}{6} \< \varphi, Z\> \cdot g(X,Y). 
\end{equation}
Clearly $K(X,Y,Z) = -K(X,Z,Y)$, and one can check that $K_{a} = 0$. It thus has the properties sufficient and necessary to be a part of the connection $\hcD$ as in Theorem \ref{thm_generalLC}. It is easy to see that
\begin{equation}
K'(Z) = \frac{1}{6} (1 - \dim{M}) \cdot \<\varphi,Z\>.
\end{equation}
Let $\phi \in \cif$, and choose $\varphi = 6 / (1 - \dim{M}) \cdot d\phi$. Choosing $J = 0$ and using Theorem \ref{thm_scalarcurvatures}, one obtains the scalar curvatures
\begin{align}
\RS = \RS(g) - \frac{1}{12} H'_{klm} H'^{klm} + 4 \Delta \phi - 4 \rVert d\phi \rVert^{2}_{g}, \; \RS_{E} = 0,
\end{align}
where $\Delta \phi = \Div(d \phi)$ is the Laplace-Bertrami operator corresponding to $g$. We have chosen the prefactor $1/6$ in order to make the relation to \cite{2013arXiv1304.4294G} as simple as possible. Now we recall the construction presented there. Let $\hcD^{0}$ be the Levi-Civita connection with respect to $\G_{E}$ and $[\cdot,\cdot]_{D}^{H+dB}$, where $J = K = 0$. To each $e \in \Gamma(E)$, one can assign the skew-symmetric operator $\chi^{\varphi}_{e} \in o(E)$ called the \emph{Weyl term} as 
\begin{equation}
\chi_{e}^{\varphi}e' = \< \rho^{\ast}(\varphi), e'\>_{E} \cdot e - \<e,e'\>_{E} \cdot \rho^{\ast}(\varphi),
\end{equation}
where $\rho^{\ast}: T^{\ast}M \rightarrow E$ is the inclusion of $1$-forms into $E$ defined by the short exact sequence (\ref{eq_courantSES}). Let $P_{\pm}^{0}$ be the two projectors onto the $\pm 1 $ eigenbundles $V^{0}_{\pm}$ of the generalized metric $\G_{E}$, and define 	the maps $\chi^{\pm \pm \pm}_{e} := P_{\pm}^{0}( \chi^{\varphi}_{P^{0}_{\pm}(e)} P^{0}_{\pm}(e'))$. Then, one can show that
\begin{equation}
\hcD^{\varphi}_{e}e' = \hcD^{0}_{e}e' + \frac{1}{3} \chi_{e}^{+++}e' + \frac{1}{3}  \chi_{e}^{---}e' + \chi_{e}^{+-+}e' + \chi_{e}^{-+-}e'.
\end{equation}
is exactly of the form as in the above Theorem \ref{thm_generalLC} with $J = 0$ and $K$ given by (\ref{eq_Kdilaton}).
\subsection{Background-independent gauge} \label{sec_BIG}
Consider now a very special case of a generalized metric $\gm$, where $B$ is a non-degenerate $2$-form on $M$, or equivalently a bijection $B \in \Hom(TM,T^{\ast}M)$. Let $\theta \in \Hom(T^{\ast}M,TM)$ be the corresponding inverse, $\theta = B^{-1}$. It is a well-known fact that $\theta$ is a $dB$-twisted Poisson bivector, see \cite{Severa:2001qm}. Recall, the Schouten-Nijenhuis bracket $[\theta,\theta]_{S}$ of any bivector $\theta$ with itself can be written as 
\begin{equation} \label{eq_SNbracket}
\frac{1}{2}[\theta,\theta]_{S}(\xi,\eta,\cdot) = [\theta(\xi),\theta(\eta)] - \theta( \Li{\theta(\xi)}\eta - \io_{\theta(\eta)}d\xi),
\end{equation}
for all $\xi,\eta \in \df{1}$. Also, recall that with our conventions we denote by the same character $\theta$ the map and the corresponding bivector, $\theta(\xi) = \theta(\cdot,\xi)$. Plugging $\xi = B(X)$ and $\eta = B(Z)$, the right-hand side becomes $-dB(X,Y,\cdot)$, and we thus have
\begin{equation} \label{eq_twistedPoisson}
\frac{1}{2}[\theta,\theta]_{S}(\xi,\eta,\zeta) = -dB(\theta(\xi),\theta(\eta),\theta(\zeta)),
\end{equation}
which is exactly a definition of the \emph{$dB$-twisted Poisson manifold}. It can be equivalently encoded into the Lie algebroid structure on $T^{\ast}M$ as follows. Define the (twisted Koszul) bracket $[\cdot,\cdot]_{\theta}^{dB}$ on $\df{1}$ as 
\begin{equation}
[\xi,\eta]_{\theta}^{dB} := \Li{\theta(\xi)}\eta - \io_{\theta(\eta)}d\xi + dB(\theta(\xi),\theta(\eta),\cdot).
\end{equation}
It is a straightforward calculation using (\ref{eq_SNbracket}) and (\ref{eq_twistedPoisson}) to show that $(T^{\ast}M, \theta, [\cdot,\cdot]_{\theta}^{dB})$ forms a Lie algebroid structure. In particular, note that $\theta$ has to satisfy
\begin{equation}
\theta( [\xi,\eta]_{\theta}^{dB} ) = [\theta(\xi), \theta(\eta)].
\end{equation}
This structure appeared independently in several papers on symplectic and Poisson geometry in 1980's, and was set into the framework of Courant algebroids in \cite{Severa:2001qm}. Now, recall the observation of \cite{Jurco:2013upa}, that the Seiberg-Witten open-closed relations of \cite{Seiberg:1999vs} can be interpreted as an orthogonal transformation of the generalized metric. Let $\theta \in \vf{2}$. Let $(G,\Phi)$ be the pair of fields parametrizing the generalized metric $\gm_{\theta} = (e^{\theta})^{T} \gm e^{\theta}$, where $e^{\theta}(X+\xi) = X + \theta(\xi) + \xi$. Then $(g,B)$ and $(G,\Phi)$ are related by \emph{open-closed relations}
\begin{equation}
(g+B)^{-1} = (G+\Phi)^{-1} + \theta.
\end{equation}
For $\theta = B^{-1}$, and given $(g,B)$, there is a unique solution
\begin{equation}
G = -Bg^{-1}B, \; \Phi = -B,
\end{equation}
called the \emph{background-independent gauge}.\footnote{See \cite{Seiberg:1999vs} for the origin of this name.}  We will now show that Levi-Civita connections with respect to $\gm$ and $[\cdot,\cdot]_{D}^{H}$ can be related to Levi-Civita connections for $\gm_{\theta}$ on the $\theta$-twisted Courant algebroid, which will turn out to be equipped with the twisted Dorfman bracket corresponding to the Lie algebroid bracket $[\cdot,\cdot]_{\theta}^{dB}$.

We have $\gm = (e^{-B})^{T} \G_{E} e^{-B}$, $\gm_{\theta} = (e^{B})^{T} \G^{\theta}_{E} e^{B}$, where $\G^{\theta}_{E} = \BlockDiag(G,G^{-1})$. Hence $\G^{\theta}_{E} = \sigma^{T} \G_{E} \sigma$, where $\sigma = e^{-B} e^{\theta} e^{-B}$. The map $\sigma$ has the block form 
\begin{equation} \label{eq_sigmablock}
\sigma = \bm{0}{\theta}{-B}{0}.
\end{equation}
Before we can proceed to the discussion of connections, it is clear that we have to calculate the twist of the bracket $[\cdot,\cdot]_{D}^{H'}$ by the map $\sigma$. Recall that $H' = H + dB$. Define the new Courant algebroid bracket $[\cdot,\cdot]_{D}^{\sigma}$ as
\begin{equation}
[e,e']_{D}^{\sigma} := \sigma^{-1}([\sigma(e),\sigma(e')]_{D}^{H'}).
\end{equation}
Note that the inverse map is simply $\sigma^{-1} = -\sigma$. We find, suggestively writing the sections of $E$ in the opposite order, that
\begin{equation}
[\xi + X, \eta + Y]_{D}^{\sigma} = [\xi,\eta]_{\theta}^{dB} + \Li{\xi}^{\theta}Y - \io_{\eta} d_{\theta}X - H'(\theta(\xi),\theta(\eta),\theta(\cdot))
\end{equation}
where $\Li{}^{\theta}$ and $d_{\theta}$ are the Lie derivative and exterior differentials on $\vf{\bullet}$ induced by the Lie algebroid $[\cdot,\cdot]_{\theta}^{dB}$. Define the three-vector $H'_{\theta} \in \vf{3}$ as $H'_{\theta}(\xi,\eta,\zeta) := H'(\theta(\xi),\theta(\eta),\theta(\zeta)$. Hence, we have proved that the twisted bracket $[\cdot,\cdot]_{D}^{\sigma}$ is the $H'_{\theta}$-twisted Dorfman bracket on $E = T^{\ast}M \oplus TM$ corresponding to the Lie algebroid $(T^{\ast}M, \theta, [\cdot,\cdot]_{\theta}^{dB})$. Also note that $H'_{\theta}$ can be written as
\begin{equation}
H'_{\theta} = - \frac{1}{2}[\theta,\theta]_{S} + H_{\theta},
\end{equation}
where $H_{\theta}$ is defined using the original $3$-form $H \in \df{3}$ similarly to $H'_{\theta}$. Note that the anchor map $\rho^{\sigma}$ corresponding to $[\cdot,\cdot]_{D}^{\sigma}$ is $\rho^{\sigma}(\xi + X) = \theta(\xi)$. It follows that $3$-vector $H'_{\theta}$ is automatically $d_{\theta}$-closed, which can also be shown directly.

Let us now focus on connections. Let $\cD$ be a Levi-Civita connection with respect to the generalized metric $\gm$, and $\hcD$ be the corresponding untwisted Levi-Civita connection with respect to the block diagonal generalized metric $\G_{E}$. Define the new connection $\hcD^{\theta}$ as the $\sigma$-twist:
\begin{equation}
\hcD_{e}e' := \sigma( \hcD^{\theta}_{\sigma^{-1}(e)} \sigma^{-1}(e')).
\end{equation}
It follows from the definition of the map $\sigma$ that $\hcD^{\theta}$ is the Courant algebroid connection on $E$ compatible with the metric $\G^{\theta}_{E}$, and torsion-free with respect to the bracket $[\cdot,\cdot]^{\sigma}_{D}$.

Now note that such connections can be classified in the completely the same manner as the Levi-Civita connection for ordinary twisted Dorfman bracket. We can thus employ Theorem \ref{thm_generalLC} to find the most general form of the connection $\hcD^{\theta}$, observing that it is parametrized by two tensor fields $K_{\theta} \in \vf{1} \otimes \vf{2}$ and $J_{\theta} \in \df{1} \otimes \df{2}$, such that $(K_{\theta})_{a} = (J_{\theta})_{a} = 0$. Note the exchanged role of the tangent and the cotangent bundle. It is straightforward to show that the two parametrizations are related as
\begin{equation}
K_{\theta}(\xi,\eta,\zeta) = K(\theta(\xi),\theta(\eta),\theta(\zeta)), \;
J_{\theta}(\theta(\xi),\theta(\eta),\theta(\zeta)) = J(\xi,\eta,\zeta). 
\end{equation}
Now, let us focus on the curvature operator. Define the curvature operator $\widehat{R}^{\theta}$ of $\hcD^{\theta}$ as the $\sigma$-twist of the curvature operator $\widehat{R}$ for $\hcD$: 
\begin{equation} \label{def_Rhattheta}
\widehat{R}(e,e')e'' := \sigma( \widehat{R}^{\theta}(\sigma^{-1}(e),\sigma^{-1}(e')) \sigma^{-1}(e'')).
\end{equation}
Recall that $\hcD$ has the form (\ref{eq_hRiemann}). It follows that $\widehat{R}^{\theta}$ can be written as
\begin{equation}
\widehat{R}^{\theta}(e,e')e'' = \hcD^{\theta}_{e} \hcD^{\theta}_{e'} e'' - \hcD^{\theta}_{e'} \hcD^{\theta}_{e} e'' - \hcD^{\theta}_{[e,e']_{D}^{\sigma}} e'' + \hcD^{\theta}_{\hfK_{\theta}(e,e')}e'',
\end{equation}
where $\hfK_{\theta}$ has the form $\hfK_{\theta}(e,e') = \< \hcD^{\theta}_{e_{\lambda}}e, e'\>_{E} \cdot ( \sigma^{-1}e^{-B} \circ pr_{2} \circ e^{B} \sigma)(g_{E}^{-1}(e^{\lambda}))$. But this can be rewritten, using the block form (\ref{eq_sigmablock}),  as 
\begin{equation}
\hfK_{\theta}(e,e') = \< \hcD_{e_{\lambda}}^{\theta}e,e'\>_{E} \cdot (e^{\theta} \circ pr_{1} \circ e^{-\theta})(g_{E}^{-1}(e^{\lambda}).
\end{equation}
But this means that the formula for $\widehat{R}^{\theta}$ is completely the same as for the ordinary twisted Dorfman bracket, just with the role of vector fields and one-forms interchanged. We can now define the corresponding Ricci tensor and scalar curvatures $\RS^{\theta}$ and $\RS_{E}^{\theta}$ using the generalized metric $\G^{\theta}_{E}$ and the Courant metric $g_{E}$. They can  now be calculated by Theorem \ref{thm_scalarcurvatures} using $J_{\theta}$ and $K_{\theta}$ . One obtains
\begin{align}
\label{eq_RStheta} \RS^{\theta} & = \RS^{\theta}(G^{-1}) - \frac{1}{12} (H'_{\theta})^{klm} (H'_{\theta})_{klm} + 4 \Div_{\theta}(K') - 4 \rVert K'_{\theta} \rVert^{2}_{G} - 4 \rVert J'_{\theta} \rVert_{G}^{2}, \\
\label{eq_RSEtheta} \RS^{\theta}_{E} &= -4 \Div_{\theta}( J'_{\theta}) + 8 \<J'_{\theta}, K'_{\theta}\>,
\end{align}
where $\RS^{\theta}(G^{-1})$ is the scalar curvature of the Levi-Civita connection with respect to the fiber-wise metric $G^{-1}$ on $T^{\ast}M$, torsion-free with respect to the Lie algebroid $[\cdot,\cdot]_{\theta}^{dB}$, and $\Div_{\theta}$ is the divergence using this Levi-Civita connection. We define $J'_{\theta} \in \df{1}$ and $K' \in \vf{1}$ as the partial traces using $G$, that is
\begin{equation}
J'_{\theta}(X) := J_{\theta}( G^{-1}(dy^{k}), \partial_{k}, X), \; K'_{\theta}(\xi) := K_{\theta}( G(\partial_{k}), dy^{k}, \xi).
\end{equation}
The main goal of the entire construction was to define the connection $\hcD^{\theta}$ so that its scalar curvature coincides with the scalar curvature of the original connection $\cD$. This is indeed true and straightforward to see this from (\ref{def_Rhattheta}). We conclude that
\begin{equation} \label{eq_RSequalsRStheta}
\RS = \RS^{\theta}, \; \RS_{E} = \RS^{\theta}_{E}.
\end{equation}
Let us finish this section with an example of such equality:
\begin{example}
Take the trivial twist $H = 0$, and set the tensors parametrizing the connection $\hcD$ to $J = 0$, and $K$ defining the dilaton as in (\ref{eq_Kdilaton}). Then $H'_{\theta} = -\frac{1}{2}[\theta,\theta]_{S} \equiv \Theta$, an example of the $R$-flux in the string theory. The tensor $K_{\theta}$ is now
\begin{equation}
K_{\theta}(\xi,\eta,\zeta) = \frac{1}{6}\<\varphi,\theta(\eta)\> \cdot G^{-1}(\xi,\zeta) - \frac{1}{6} \<\varphi, \theta(\zeta)\> \cdot G^{-1}(\xi,\eta).
\end{equation}
Hence the partial trace $K'_{\theta}$ is $K'_{\theta} = - \frac{1}{6}(1 - \dim{M}) \cdot \theta(\varphi)$, and for $\varphi = 6 / (1 - \dim{M}) d\phi$, we get $K'_{\theta} = -\theta(d\phi)= d_{\theta}\phi$. Plugging this into (\ref{eq_RStheta}) and using the relation (\ref{eq_RSequalsRStheta}) gives the final equality
\begin{equation}
\RS(g) - \frac{1}{12} dB_{ijk} dB^{ijk} + 4 \Delta(\phi) - 4 \rVert d\phi \rVert^{2}_{g} = \RS^{\theta}(G^{-1}) - \frac{1}{12} \Theta^{ijk} \Theta_{ijk} + 4 \Delta_{\theta}(\phi) - 4 \rVert d_{\theta}\phi \rVert^{2}_{G}. 
\end{equation}
Here $\Delta_{\theta}\phi = \Div_{\theta}( d_{\theta}\phi)$ is the Laplace-Bertrami operator corresponding to the Levi-Civita connection of the fiber-wise metric $G^{-1}$ and the Lie algebroid $[\cdot,\cdot]_{\theta}^{dB}$ on $T^{\ast}M$. This is precisely the relation of the low energy effective actions of the closed string  and the "non-geometric" string theory as it was derived on the Lie algebroid level in \cite{Blumenhagen:2012nt, Blumenhagen:2013aia}. 
\end{example}
\section{Heterotic theory} \label{sec_heterotic}
\subsection{Heterotic Courant algebroids and reduction} \label{sec_heteroticCourant}
Let us now consider a more general class of Courant algebroids. The concept of a heterotic Courant algebroid was introduced in \cite{Baraglia:2013wua} as a subclass of transitive Courant algebroids. It was shown that such a Courant algebroid can always be obtained by a reduction from an exact Courant algebroid over a principal $G$-bundle with vanishing first Pontryagin class. This reduction is a special example of Courant algebroid reductions described in \cite{Bursztyn2007726}, \cite{2015LMaPh.tmp...53S}. Finally, recall that the first Pontryagin class and its relation to Courant algebroids also appeared in \cite{2005math......9563B}.

We start with recalling some definitions. Let $(L,l,[\cdot,\cdot]_{L})$ be a transitive Lie algebroid over $M$, that is $l: L \rightarrow TM$ be surjective. The kernel $K := \ker{l}$ is naturally endowed with a totally intransitive Lie algebroid structure. We denote the corresponding bracket as $[\cdot,\cdot]_{K}$ (in fact $K$ is always a Lie algebra bundle, see \cite{Mackenzie}). Let $\<\cdot,\cdot\>_{K}$ be a non-degenerate symmetric bilinear fiber-wise form on $K$. One says that $(L,l,[\cdot,\cdot]_{L},\<\cdot,\cdot\>_{K})$ is a \emph{quadratic Lie algebroid}, if for all $k,k' \in \Gamma(K)$ and $\psi \in \Gamma(L)$
\begin{equation}
l(\psi).\<k,k'\>_{K} = \<[\psi,k]_{L},k'\>_{K} + \<k,[\psi,k']_{L}\>_{K}
\end{equation}
holds.
Note that this equation makes sense as $K$ is an ideal in $L$. 

Let $\pi: P \rightarrow M$ be a principal $G$-bundle, with $G$ being a semi-simple Lie algebra. Let $L$ be the Atiyah algebroid of $P$. The kernel $K$ of its anchor can naturally be  identified with the adjoint bundle $\g_{P} = P \times_{Ad} \g$, it comes equipped with the fiber-wise metric induced by the Killing form $c = \<\cdot,\cdot\>_{\g}$ of $\g$ (and denoted by the same symbol). Then $(L,l,[\cdot,\cdot]_{L},\<\cdot,\cdot\>_{\g})$ is a quadratic Lie algebroid. 

A Courant algebroid $(E,\rho,\<\cdot,\cdot\>_{E},[\cdot,\cdot]_{E})$ is called \emph{transitive}, if its anchor $\rho \in \Hom(E,TM)$ is surjective. In this case $E / \rho^{\ast}(T^{\ast}M)$ is naturally endowed with a quadratic Lie algebroid structure $L$. One says that $E$ is the \emph{heterotic Courant algebroid}, if $L$ is isomorphic to the Atiyah  algebroid of some principal $G$-bundle $P$ over $M$.

Similarly to an exact Courant algebroids, every heterotic Courant algebroid is isomorphic to the "model example" which is defined as follows. Let $\pi: P \rightarrow M$ be a principal $G$-bundle, where $G$ is semi-simple with the non-degenerate Killing form $c = \<\cdot,\cdot\>_{\g}$. Define the vector bundle $E' := TM \oplus \g_{P} \oplus T^{\ast}P$, $\rho' \in \End(E',TM)$ as projection onto the first factor, and the pairing $\<\cdot,\cdot\>_{E'}$ as 
\begin{equation}
\<(X,\Phi,\xi), (Y,\Phi',\eta)\>_{E'} := \eta(X) + \xi(Y) + \<\Phi,\Phi'\>_{\g},
\end{equation}
where $X,Y \in \vf{}$, $\xi,\eta \in \df{1}$ and $\Phi,\Phi' \in \Gamma(\g_{P})$. We identify the sections of $\Gamma(\g_{P})$ with $G$-equivariant functions from $P$ to $\g$. Let $A \in \Omega^{1}(P,\g)$ be a connection on $P$, and let $F \in \Omega^{2}(M,\g_{P})$ its curvature. As always, we assume the convention $F(X) := F(\cdot,X)$. Let $H_{0} \in \df{3}$. Finally, let $\cD$ denote the vector bundle connection on $\Gamma(\g_{P})$ induced by $A$, that is $\cD_{X}\Phi = X^{h}.\Phi$, $X^{h}$ being the horizontal lift of the vector field $X$ on the base manifold.
The bracket $[\cdot,\cdot]_{E'}$ is defined as
\begin{align} \label{def_bracketE'}
[(X,\Phi,\xi),(Y,\Phi',\eta)]_{E'} := \big( & [X,Y], \cD_{X}\Phi' - \cD_{Y}\Phi - F(X,Y) - [\Phi,\Phi']_{\g}, \nonumber \\
& \Li{X}\eta - \io_{Y}d\xi - H_{0}(X,Y,\cdot) \\
& - \<F(X),\Phi'\>_{\g} + \<F(Y),\Phi\>_{\g} + \<\cD{\Phi},\Phi'\>_{\g} \big). \nonumber
\end{align}
This bracket satisfies the Leibniz identity (\ref{def_courant2}) if and only if 
\begin{equation} \label{eq_heteroticmain}
dH_{0} + \frac{1}{2} \<F \^ F\>_{g} = 0,
\end{equation}
i.e. in particular, the first Pontryagin class of $P$ has to vanish. It is easy to see that $(E',\rho',\<\cdot,\cdot\>_{E'},[\cdot,\cdot]_{E'})$ defines a heterotic Courant algebroid, as the first two components of the bracket $[\cdot,\cdot]_{E'}$ is nothing but the Atiyah algebroid, where the connection $A$ was used to split the Atiyah sequence of $P$. 

The bracket described above can always be  obtained by a reduction of an exact Courant algebroid on $E = TP \oplus T^{\ast}P$, as we will now recall. Assume that $E$ is equipped with the $H$-twisted Dorfman bracket $[\cdot,\cdot]_{D}^{H}$ defined as in (\ref{def_Hdorfman}), where $H$ will be specified below. For details, see \cite{Baraglia:2013wua}. Let $\#: \g \rightarrow \vfP{}$ denote the infinitesimal version of the action of $G$ on $P$. A \emph{trivially extended action} of $\g$ on $E$ is a bracket morphism $R: \g \rightarrow \Gamma(E)$, such that the diagram
\begin{equation}
\begin{tikzcd}
\g \arrow[r,"R"] \arrow[rd, "\#"] & \Gamma(E) \arrow[d,"\rho"] \\
   & \vfP{}
\end{tikzcd}
\end{equation}
commutes, and the induced action $x \blacktriangleright e = [R(x),e]_{D}^{H}$ integrates to the $G$-action on $E$. 

Consider the following setup. Take $H \in \dfP{3}$ as 
\begin{equation} \label{eq_choiceH}
H := \pi^{\ast}(H_{0}) + \frac{1}{2} CS_{3}(A),
\end{equation}
where $CS_{3}(A) = \< \F \^ A\>_{\g} - \frac{1}{3!} \<A \^ [A \^ A]_{\g} \>_{\g}$ is the Chern-Simons $3$-form corresponding to $A$, and $\F = dA + \frac{1}{2}[A \^ A]_{\g}$. It is obviously defined so that $dH = 0$ if and only if (\ref{eq_heteroticmain}) holds. Define the map $R$ as
\begin{equation}
R(x) := \#{x} - \xi(x),
\end{equation}
where $\xi(x) = \frac{1}{2} \<A,x\>_{\g}$ is a $G$-equivariant map from $\g$ to $\dfP{1}$. Note that $\xi$ and $H$ satisfy the condition
\begin{equation}
d(\xi(x)) - \io_{\#{x}}H = 0. 
\end{equation}
The map $\xi$ is intentionally defined so that the splitting $TP \oplus T^{\ast}P$ of $E$ is $\g$-invariant with respect to the action of $\g$ induced by $R$. Indeed, one gets
\begin{equation}
[R(x),Y+\eta]_{D}^{H} = [\#{x},Y] + \Li{\#{x}}\eta + \io_{Y}d(\xi(x)) - H(\#{x},Y,\cdot) = [\#{x},Y] + \Li{\#{x}}\eta.
\end{equation}
This shows that the induced action of $\g$ on $\Gamma(E)$ is just the canonical action of $\g$ on $TP \oplus T^{\ast}P$ which can always be integrated. One easily checks that $[R(x),R(y)]_{D}^{H} = R([x,y])$, and $R$ is thus the trivially extended action of $\g$ on $E$. Moreover, observe that 
\begin{equation} \label{eq_RRpairing}
\<R(x),R(x)\>_{\g} = - 2 \io_{\#{x}}(\xi(x)) = -\<x,x\>_{\g} = - 2 c(x),
\end{equation}
where $c(x)$ is the quadratic form corresponding to the bilinear form $c = \<\cdot,\cdot\>_{\g}$. This in particular means that $H + \xi$ can be viewed as $3$-cochain in the Cartan model of $G$-equivariant cohomology of $P$, satisfying the condition
$d_{G}(H+\xi) = -c$, where $d_{G}$ is the equivariant differential. This can be in fact achieved for any trivially extended action of a compact $G$ (in general, $c$ does not have to be non-degenerate) on the exact Courant algebroid on $P$, see \cite{Bursztyn2007726}.

The reduction procedure is now following. Define the subbundle $K$ to be generated by $R(\g)$, and $K^{\perp}$ its orthogonal complement. For any trivially extended action, these are well-defined $G$-invariant subbundles of $E$. The reduced Courant algebroid $E_{red}$ is defined as the quotient
\begin{equation} \label{eq_Ered}
E_{red} := \frac{K^{\perp} / G}{(K \cap K^{\perp}) / G},
\end{equation}
so that $\Gamma(E_{red}) = \Gamma_{G}(K^{\perp}) / \Gamma_{G}(K \cap K^{\perp})$. However, note that for our example (\ref{eq_RRpairing}) hold and thus $K \cap K^{\perp} = \{0\}$. Of course, the claim is that $E' \cong E_{red}$. But this is easy to see as follows. Let $\fPsi: \Gamma(E') \rightarrow \Gamma_{G}(K^{\perp})$ be the $\cif$-linear map defined as
\begin{equation} \label{eq_isomorphismreduced}
\fPsi(X,\Phi,\xi) := X^{h} + j(\Phi) + \pi^{\ast}(\xi) + \frac{1}{2}\<A, \Phi\>_{\g},
\end{equation}
for all $(X,\Phi,\xi) \in \Gamma(E')$, where $j: \Gamma(\g_{P}) \rightarrow \Gamma_{G}(TP)$ maps $\Phi$ to the vector field $[j(\Phi)](p) = \#{[\Phi(p)]}(p)$. It is easy to see that $\fPsi$ is a bijection. The bracket on $\Gamma_{G}(K^{\perp})$ induced from $E$ is the bracket (\ref{def_bracketE'}) on $E'$. The calculation is straightforward and can be found explicitly done in \cite{Baraglia:2013wua}. Note that we do not have to assume that $G$ is compact during this entire construction, except that for a non-compact $G$, the assumed form of $R$ is not necessarily the most general trivially extended action.

To conclude this section, note that $\Gamma_{G}(K) \cong \Gamma(\g_{P})$, where the isomorphism is given by a map $R$, defined as 
\begin{equation}
R(\Phi) := j(\Phi) - \frac{1}{2}\<A,\Phi\>_{\g},
\end{equation}
which we denote by the same letter as the extended action $R: \g \rightarrow \Gamma(E)$ defined above. Thus, we can  decompose $\Gamma_{G}(E)$ as $\Gamma_{G}(E) \cong \Gamma(E') \oplus \Gamma(\g_{P})$. The twisted Dorfman bracket $[\cdot,\cdot]_{D}^{H}$ restricted  to $\Gamma_{G}(E)$ can be now written as 
\begin{equation} \label{eq_HDorfmandecomp}
\begin{split}
[(\psi,\Phi),(\psi',\Phi')]_{D}^{H} = ( & [\psi,\psi']_{E'} - [I(\Phi),I(\Phi')]_{E'}, \\
& \pi_{2}([\psi,I(\Phi')]_{E'}) - \pi_{2}([\psi',I(\Phi)]_{E}) - 2[\Phi,\Phi']_{\g}),
\end{split}
\end{equation}
where $I: \g_{P} \rightarrow E'$ is the inclusion, and $\pi_{2}: E' \rightarrow \g_{P}$ the projection onto the second factor of $E' = TM \oplus \g_{P} \oplus T^{\ast}M$.
\subsection{Generalized metric and reduction} \label{sec_gmred}
As generalized metrics form the central point of this paper, we will now briefly discuss how a generalized metric is defined on the vector bundle $E'$, and how it can be obtained via the reduction of a generalized metric on the exact Courant algebroid $E$. For a general $G$, it can be in fact defined a little bit more generally that in \cite{Baraglia:2013wua}, but for the sake of clarity we will stick to the compact case.

The first of the equivalent formulations we have given in the Section \ref{sec_genmetric} can now be easily generalized to the heterotic Courant algebroid $E'$. Indeed, we say that $\tau' \in \End(E')$ defines a generalized metric on $E'$, if $\tau'^{2} = 1$ and the formula
\begin{equation}
\gm'(\psi,\psi') = \<\psi,\tau'(\psi')\>_{E'}
\end{equation}
defines a positive definite fiber-wise metric on $E'$. Equivalently, we can say that generalized metric is a positive definite fiber-wise metric $\gm'$ on $E'$, such that viewed as a map $\gm' \in \Hom(E',E'^{\ast})$, it is orthogonal. As $\gm'$ is positive definite, it can always be  uniquely decomposed as 
\begin{equation}
\gm' = 
\begin{pmatrix}
1 & A'^{T} & C^{T} \\
0 & 1 & B'^{T} \\
0 & 0 & 1
\end{pmatrix}
\begin{pmatrix}
g_{0} & 0 & 0 \\
0 & \h & 0 \\
0 & 0 & {h}_{0}^{-1}
\end{pmatrix}
\begin{pmatrix}
1 & 0 & 0 \\
A' & 1 & 0 \\
C & B' & 1
\end{pmatrix},
\end{equation}
with $g_{0}, h_{0}$ being metric tensors on $M$, $\h$ being a positive definite fiber-wise metric on $\g_{P}$, $C \in \Hom(TM,T^{\ast}M)$, $A' \in \Omega^{1}(M,\g_{P})$ and $B' \in \Hom(\g_{P},T^{\ast}M)$. A careful analysis of the orthogonality conditions shows that $\gm'$ is a generalized metric if and only if $h_{0} = g_{0}$, $B' = -A'^{T}c$, $C = -B_{0} - \frac{1}{2} A'^{T} c A'$ for some $B_{0} \in \df{2}$ and $\h$ is compatible with $c$, i.e. $\h c^{-1} \h = c$. Thus, every generalized metric has the block form
\begin{equation} \label{eq_G'}
\gm' = \begin{pmatrix}
1 & A'^{T} & B - \frac{1}{2}A'cA'^{T} \\
0 & 1 & -c A' \\
0 & 0 & 1 
\end{pmatrix}
\begin{pmatrix}
g_{0} & 0 & 0 \\
0 & \h & 0 \\
0 & 0 & g_{0}^{-1}
\end{pmatrix}
\begin{pmatrix}
1 & 0 & 0 \\
A' & 1 & 0 \\
-B_{0} - \frac{1}{2}A'^{T} c A' & -A'^{T}c & 1
\end{pmatrix},
\end{equation}
i.e $\gm' = (e^{-\C})^{T} \G_{E'} e^{-\C}$ with $\C \in o(E')$ being skew-symmetric with respect to the pairing $g_{E'} = \<\cdot,\cdot\>_{E'}$ and 
\begin{equation} \label{eq_C}
\G_{E'} = \begin{pmatrix}
g_{0} & 0 & 0 \\
0 & \h & 0 \\
0 & 0 & g_{0}^{-1}
\end{pmatrix}, \;
\C = \begin{pmatrix}
0 & 0 & 0 \\
-A' & 0 & 0 \\
B_{0} & A'^{T}c & 0
\end{pmatrix}
\end{equation}
For compact $G$, there is in fact only possible choice for $\h$.
\begin{lemma} \label{lem_hisminusc}
For $G$ compact, the only $\h$ compatible with $c$ is $\h = -c$. 
\end{lemma}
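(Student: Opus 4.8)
The plan is to reduce the compatibility condition $\h c^{-1} \h = c$ to pointwise linear algebra on a single fibre of $\g_P$ and then exploit the definiteness properties of the two forms. The decisive input from compactness is the classical fact that for a compact semisimple $G$ the Killing form $c$ is negative definite; equivalently, $-c$ is a positive definite fibre-wise metric on $\g_P$. Since $\h$ is positive definite by hypothesis, I would fix a point of $M$ and regard both $\h$ and $c$ as symmetric bilinear forms (symmetric matrices) on the fibre, with $\h$ and $-c$ both positive definite.

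First I would simultaneously diagonalize. Because $-c$ is positive definite, there is a basis of the fibre in which $-c$ is represented by the identity and $\h = \diag(\lambda_{1}, \dots, \lambda_{r})$ with all $\lambda_{i} > 0$ (the standard simultaneous diagonalization of a positive definite form together with a symmetric form). In this basis $c = -\mathbf{1}$, hence $c^{-1} = -\mathbf{1}$, and the condition $\h c^{-1} \h = c$ reads $-\diag(\lambda_{1}^{2}, \dots, \lambda_{r}^{2}) = -\mathbf{1}$, i.e. $\lambda_{i}^{2} = 1$ for every $i$. Combined with $\lambda_{i} > 0$ this forces $\lambda_{i} = 1$ for all $i$, so $\h = -c$ in this basis, and therefore as forms. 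Since this argument applies verbatim at each point and always yields the unique answer $\h = -c$, there is no smoothness issue in passing from fibres to the bundle.

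As a manifestly basis-free alternative I would set $S := c^{-1} \h \in \End(\g_{P})$; symmetry of $\h$ makes $S$ self-adjoint with respect to $c$, and a short computation shows the compatibility condition is equivalent to $S^{2} = \mathrm{id}$, so $\g_{P}$ splits into the $\pm 1$-eigenbundles of $S$. On the $+1$-eigenspace one has $\h = c$, which is negative definite, contradicting positivity of $\h$ unless that eigenspace is trivial; hence $S = -\mathrm{id}$ and again $\h = -c$. I do not expect a genuine obstacle here: the only nontrivial ingredient is the negative-definiteness of $c$ coming from compactness and semisimplicity, and everything else is elementary linear algebra. The mild point to state carefully is precisely that invocation of compactness, since for a general (non-compact) semisimple $G$ the form $c$ is merely non-degenerate and the conclusion fails.
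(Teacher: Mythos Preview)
Your proof is correct and follows essentially the same approach as the paper: work fibre-wise, simultaneously diagonalize the two positive definite forms $-c$ and $\h$, and use positivity of the eigenvalues to pin down $\h=-c$. The only cosmetic difference is that the paper diagonalizes both forms with positive diagonal entries $e^{\lambda_i}$, $e^{\alpha_i}$ rather than normalizing $-c$ to the identity; your additional basis-free argument via $S=c^{-1}\h$ with $S^2=\mathrm{id}$ is a nice alternative not present in the paper.
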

\begin{proof}
For compact Lie group, $-c$ is positive definite. Let us work in the fixed fiber of $\g_{P}$ which is just the Lie algebra $\g$. As both $-c$ and $\h$ are positive definite, their matrices can simultaneously be diagonalized, $-c = \diag(e^{\lambda_{1}}, \dots, e^{\lambda_{k}})$, $\h = \diag(e^{\alpha_{1}}, \dots, e^{\alpha_{k}})$. Plugging into the the compatibility condition $\h c^{-1} \h = c$ then forces $2 \alpha_{i} - \lambda_{i} = \lambda_{i}$. That is $\alpha_{i} = \lambda_{i}$ and consequently $\h = -c$.
\end{proof}

Return now to the Courant algebroid $E = TP \oplus T^{\ast}P$ and describe how $\gm'$ can be obtained by reducing a generalized metric $\gm$ on $P$. Let $\tau \in \End(E)$ be a generalized metric on $E$. We say that $\tau$ is \emph{relevant for reduction} if
\begin{equation} \label{def_relevantGM}
[R(x), \tau(e)]_{D}^{H} = \tau( [R(x), e]_{D}^{H}) \;\;\mbox{and}\;\; \tau(K^{\perp}) \subseteq K^{\perp}. 
\end{equation}
The first condition is sufficient and necessary in order for $\tau$ to preserve $G$-invariant sections. It is clear that $\tau' = \tau|_{\Gamma_{G}(K^{\perp})}$ will be (using the isomorphism (\ref{eq_isomorphismreduced})) a generalized metric on $E'$ according to the above definition. 

Let us now examine the above conditions in terms of the fields $(g,B)$  parametrizing the generalized metric $\gm$ corresponding to $\tau$ on $P$.
For, consider  sections in the special form $e = \fPsi_{\pm}(X)$ for $X \in \vfP{}$, where $\fPsi_{\pm}$ were defined in (\ref{def_fPsipm}). We obtain the conditions
\begin{equation}
\pm [R(x), \fPsi_{\pm}(X)]_{D}^{H} = \tau([R(x),\fPsi_{\pm}(X)]_{D}^{H}). 
\end{equation}
Hence, $[R(x),\fPsi_{\pm}(X)]_{D}^{H} \in V_{\pm}$. Consequently 
\begin{equation}
[R(x),\fPsi_{\pm}(X)]_{D}^{H} = \fPsi_{\pm}( \rho([R(x),\fPsi_{\pm}(X)]_{D}^{H})) = \fPsi_{\pm}( [\#{x},X]). 
\end{equation}
This gives the condition $\Li{\#{x}}( (\pm g + B)(X)) = (\pm g + B)( [\#{x},X])$. Adding and subtracting these two condition gives, no so surprisingly,
\begin{equation}
\Li{\#{x}}g = 0, \; \Li{\#{x}}B = 0.
\end{equation}
Thus, $g$ and $B$ have to be $G$-invariant $2$-tensors on $P$. Equivalently, they define maps from $\Gamma_{G}(TP)$ to $\Gamma_{G}(T^{\ast}P)$. Further, note that the connection $A$ induces the isomorphisms
\begin{equation}
\Gamma_{G}(TP) \cong \vf{} \oplus \Gamma(\g_{P}), \; \Gamma_{G}(T^{\ast}P) \cong \df{1} \oplus \Gamma(\g^{\ast}_{P}),
\end{equation}
Hence, we can decompose $g$ and $B$ with respect to these direct sums, and write them in the following block form 
\begin{equation} \label{eq_gBdecompositions}
g = \bm{1}{A'^{T}}{0}{1} \bm{g_{0}}{0}{0}{\h} \bm{1}{0}{A'}{1}, \; B = \bm{B_{0}}{-B'^{T}}{B'}{\mathfrak{b}}. 
\end{equation}
Let us now examine the second condition imposed on $\tau$. With respect to the decomposition $E = K^{\perp} \oplus K$, the map $\tau$ and the pairing $g_{E}$ have the  block form
\begin{equation} \label{eq_taugEblock}
\tau = \bm{\tau'}{\tau_{1}}{0}{\tau_{K}}, \; g_{E} = \bm{g_{E'}}{0}{0}{-c}.
\end{equation}
The condition $\tau^{2} = 1$ and the orthogonality $\tau^{T} g_{E} \tau = g_{E}$ imply $\tau_{1} = 0$. Similarly to Lemma \ref{lem_hisminusc}, it follows that for a compact $G$ necessarily $\tau_{K} = 1$. In particular, we have $\tau(R(x)) = R(x)$. This requires $R(x) \in V_{+}$ and therefore necessarily $R(x) = \fPsi_{+}(\#{x})$. It follows that
\begin{equation}
\xi(x) = -(g+B)(\#{x}). 
\end{equation}
Using the explicit form of $\xi$ and $(g,B)$ decomposed as in (\ref{eq_gBdecompositions}) leads to
\begin{equation}
B' = \h A', \; \h + \mathfrak{b} = - \frac{1}{2}c. 
\end{equation}
Hence, $\mathfrak{b} = 0$, $\h = - \frac{1}{2}c$ and $B' = -\frac{1}{2} c A' $. It follows that $g$ and $B$  have the form
\begin{equation} \label{eq_gBblocksfinal}
g = \bm{1}{A'^{T}}{0}{1} \bm{g_{0}}{0}{0}{-\frac{1}{2}c} \bm{1}{0}{A'}{1}, \; B = \bm{B_{0}}{ \frac{1}{2} A'^{T}c}{-\frac{1}{2}c A'}{0}. 
\end{equation}
There are no other conditions imposed on the fields in $g$ and $B$, as the condition $\tau(R(x)) = R(x)$  already implies $\tau(K^{\perp}) \subseteq K^{\perp}$. We have 
\[ \<\tau(e),R(x)\>_{E} = -\<e, \tau(R(x))\>_{E} = -\<e, R(x)\>_{E}, \]
and thus $e \in K^{\perp}$ if and only if $\tau(e) \in K^{\perp}$. 

The easiest way how to show that $\gm'$ in the form (\ref{eq_G'}) with $\h = -c$ is indeed obtained by the reduction from $\gm$ is to write $\gm$ as $4 \times 4$ block matrix with respect to the isomorphism $\Gamma_{G}(E) \cong \vf{} \oplus \Gamma(\g_{P}) \oplus \df{1} \oplus \Gamma(\g^{\ast}_{P})$.  Note that $\fPsi: E' \rightarrow E$ can be, using this isomorphism, written in the block form
\begin{equation}
\fPsi = \begin{pmatrix}
1 & 0 & 0 \\
0 & 1 & 0 \\
0 & 0 & 1 \\
0 & \frac{1}{2}c & 0
\end{pmatrix}
\end{equation}
It is then an easy calculation to check that in fact $\gm' = \fPsi^{T} \gm \fPsi$, which proves our assertion. 

To finish this section, note that we already know that the generalized metric $\gm$ can be written as product $\gm = (e^{-B}) \G_{E} e^{-B}$. However, it is not the most convenient form for the purpose of reduction. Instead, note that one can write $g = [e^{A'}]^{T} g' e^{A'}$, where 
\begin{equation} \label{eq_H'_{A'}}
g' = \bm{g_{0}}{0}{0}{-\frac{1}{2}c}, \; e^{A'} = \bm{1}{0}{A'}{1}. 
\end{equation}
Moreover, one has $B = [e^{-A'}]^{T} B e^{-A'}$. It follows that generalized metric $\gm$ can be  written as $\gm = (e^{\A'})^{T} (e^{-B})^{T} \G'_{E} e^{-B} e^{\A'}$, where $e^{\A'}$ is defined as $e^{\A'}(X+\xi) := e^{A'}(X) + (e^{-A'})^{T}(\xi)$ and $\G'_{E}$ has $4 \times 4$ block diagonal form
\begin{equation}
\G'_{E} = \begin{pmatrix}
g_{0} & 0 & 0 & 0 \\
0 & -\frac{1}{2}c & 0 & 0 \\
0 & 0 & g_{0}^{-1} & 0 \\
0 & 0 & 0 & -2c^{-1}
\end{pmatrix}
\end{equation}
Define a new map $e^{-\B} := e^{-B} e^{\A'}$. Using this, we find its block form to be
\begin{equation}
e^{-\B} = \begin{pmatrix}
1 & 0 & 0 & 0 \\
A' & 1 & 0 & 0 \\
-B_{0} - \frac{1}{2}A'^{T}cA & -\frac{1}{2}A'^{T}c & 1 & -A'^{T} \\
\frac{1}{2}cA' & 0 & 0 & 1
\end{pmatrix}
\end{equation}
It is a very useful observation that $e^{-\B}$ is block diagonal with respect to the decomposition $\Gamma_{G}(E) \cong \Gamma(E') \oplus \Gamma(\g_{P})$. Moreover, we claim that 
\begin{equation} \label{eq_GeetoBblocks}
\G'_{E} = \bm{\G_{E'}}{0}{0}{-c}, \; e^{-\B} = \bm{e^{-\C}}{0}{0}{1},
\end{equation}
where $\G_{E'}$ and $\C \in \End(E')$ are the maps in (\ref{eq_C}). We know that any $G$-invariant section of $E$ corresponds to a $4$-tuple $(X,\Phi,\xi,\Psi)$, where $X \in \vf{}$, $\xi \in \df{1}$, $\Phi \in \Gamma(\g_{P})$, and $\Psi \in \Gamma(\g^{\ast}_{P})$. The decomposition of this $4$-tuple with respect to $\Gamma_{G}(E) = \Gamma_{G}(K^{\perp}) \oplus \Gamma_{G}(K)$ is 
\begin{equation} \label{eq_Edecomp}
(X,\Phi,\xi,\Psi) = \fPsi( X, \frac{1}{2}\Phi + c^{-1}\Psi, \xi) + R(\frac{1}{2}\Phi - c^{-1}\Psi).
\end{equation}
Using this decomposition, it is now straightforward to prove that 
\begin{equation} e^{-\B}( \fPsi(X,\Phi,\xi)) = \fPsi(e^{-\C}(X,\Phi,\xi)), \; e^{-\B}(R(\Phi)) = R(\Phi). \end{equation}
Note that (\ref{eq_GeetoBblocks}) can be used to prove once more that $\gm$ reduces to $\gm'$ on $\Gamma_{G}(K^{\perp}) \cong \Gamma(E')$.

\subsection{Twisting of the heterotic bracket} \label{sec_hettwist}
In order to proceed to the reduction of connections, we need to discuss the twisting of the bracket $[\cdot,\cdot]_{E'}$ defined by (\ref{def_bracketE'}) by the map $e^{\C} \in \Aut(E')$ with $\C$ defined as in (\ref{eq_C}). Put 
\begin{equation}
[\psi,\psi']'_{E'} = e^{-\C}[e^{\C}(\psi), e^{\C}(\psi')]_{E'},
\end{equation}
for all $\psi,\psi' \in \Gamma(E')$. Note that $e^{\C}$ is orthogonal with respect to the pairing $\<\cdot,\cdot\>_{E'}$ and  $\rho' = \rho' \circ e^{\C}$. This shows that $e^{\C}$ is an isomorphism of Courant algebroids $(E',\rho',\<\cdot,\cdot\>_{E'},[\cdot,\cdot]_{E'})$ and $(E',\rho',\<\cdot,\cdot\>_{E'},[\cdot,\cdot]'_{E'})$. We will assume the twisted bracket to take the form
\begin{equation}
[\psi,\psi']'_{E'} = [\psi,\psi]_{E'} - d\C(\psi,\psi'),
\end{equation}
where $d\C: \Gamma(E') \times \Gamma(E') \rightarrow \Gamma(E')$ is to be determined. 

Before we write down the result, let us introduce some further notation. We start by noting that a covariant derivative $\cD$ corresponding to the connection $A$ induces a differential $d_{\cD}$ on $\Gamma(\g_{P})$-valued $1$-forms. Explicitly ,
\begin{equation}
(d_{\cD}A')(X,Y) = \cD_{X}(A'(Y)) - \cD_{Y}(A'(X)) - A'([X,Y]).
\end{equation}
Moreover, we can use the fiber-wise bracket $[\cdot,\cdot]_{\g}$ on $\Gamma(\g_{P})$ to define a "covariant differential" as
\begin{equation}
\D_{\cD}A' := d_{\cD}A' + \frac{1}{2}[A' \^ A']_{\g}
\end{equation}
and a $3$-form 
\begin{equation}
\tilde{C}_{3}(A') := \< \D_{\cD}A' \^ A'\>_{\g} - \frac{1}{3!} \< [A' \^ A']_{\g} \^ A'\>_{\g}.
\end{equation}
Also, one can define a $3$-form $d\C \in \Omega^{3}(E')$ using the Courant pairing $\<\cdot,\cdot\>_{E'}$: 
\begin{equation}
d\C(\psi,\psi',\psi'') = \< d\C(\psi,\psi'), \psi''\>_{E'}. 
\end{equation}
The complete skew-symmetry of the right-hand side follows from the Courant algebroid axioms (for both brackets) and the fact that the respective Courant algebroids share the pairing and the anchor.   The result can be summarized as follows:
\begin{align}
\label{eq_dC1} d\C(X,Y,Z) &= (dB_{0} - \frac{1}{2}\tilde{C}_{3}(A') - \<F \^ A'\>_{\g})(X,Y,Z), \\
d\C(\Phi, X,Y) &= \<\Phi, [\D_{\cD}A'](X,Y)\>_{\g}, \\
d\C(\Phi,\Phi',X) &= \<[\Phi,\Phi']_{\g}, A'(X)\>_{\g},
\end{align}
with all other components being zero. The twisted bracket $[\cdot,\cdot]'_{E'}$ can be now written as 
\begin{equation}
[\psi,\psi']'_{E'} = [\psi,\psi']_{E'} - g_{E'}^{-1} d\C(\psi,\psi',\cdot)
\end{equation}
To understand the result of the twisting better, note that $A'$ is the difference of the connection $A$ and the connection induced by the $G$-invariant metric $g$ on $P$. The resulting twisting is thus nothing else but a change of splitting in the Atiyah sequence accompanied by a relevant change of the $3$-form $H_{0}$. Define 
\begin{equation}
F' := F + \D_{\cD}A', \; \cD'_{X} = \cD_{X} + [A'(X),\cdot]_{\g}, \mbox{ and} \; H'_{0} := H_{0} - \frac{1}{2}\tilde{C}_{3}(A') - \<F \^ A'\>_{\g}.
\end{equation}
The twisted bracket $[\cdot,\cdot]'_{E'}$ then takes the form
\begin{align} \label{eq_bracket'E'}
[(X,\Phi,\xi),(Y,\Phi',\eta)]'_{E'} = \big( & [X,Y], \cD'_{X}\Phi' - \cD'_{Y}\Phi - F'(X,Y) - [\Phi,\Phi']_{\g}, \nonumber \\
& \Li{X}\eta - \io_{Y}d\xi - H'_{0}(X,Y,\cdot) - dB_{0}(X,Y,\cdot)\\
& - \<F'(X),\Phi'\>_{\g} + \<F'(Y),\Phi\>_{\g} + \<\cD'{\Phi},\Phi'\>_{\g} \big). \nonumber
\end{align}
Note that $H'_{0}$ must obey the equation
\begin{equation} \label{eq_dH'_0}
dH'_{0} + \frac{1}{2}\<F' \^ F'\>_{\g} = 0.
\end{equation}
This is in accordance with the fact that if the Pontraygin class vanishes for one connection, it has to vanish for any other connection. From equations (\ref{eq_heteroticmain} and \ref{eq_dH'_0}) we find 
\begin{equation}
\<F' \^ F'\>_{\g} = \<F \^ F\>_{\g} + d( \tilde C_{3}(A') + 2 \<F \^ A'\>_{\g}).
\end{equation}
Note that this relation in fact holds on a general principal bundle $P$.
\subsection{Heterotic Levi-Civita connections} \label{sec_hetcon}
Consider now the heterotic Courant algebroid $(E', \rho', \<\cdot,\cdot\>_{E'}, [\cdot,\cdot]_{E'})$. Let $\gm'$ be a generalized metric on $E'$, which can be written as $\gm' = (e^{-\C})^{T} \G_{E'} e^{-\C}$. One can take the approach similar to the one of Section \ref{sec_LC}, but now working with $3 \times 3$ block matrices. Let $\cD'$ be a Levi-Civita connection with respect to the generalized metric $\gm'$. Define a new connection $\hcD'$:
\begin{equation}
\cD'_{\psi}\psi' = e^{\C}( \hcD'_{e^{-\C}(\psi)} e^{-\C}(\psi')). 
\end{equation}
It follows that $\hcD'$ is a Levi-Civita connection on $E'$ with respect to the generalized metric $\G_{E'}$ and the twisted heterotic bracket $[\cdot,\cdot]'_{E'}$.

It is a straightforward derivation to show that the most general connection $\hcD'$ metric compatible with both $\G_{E'}$ and $g_{E'}$ has the following block form
\begin{align}
\label{eq_hcd'1} \hcD'_{X} &= \begin{pmatrix}
\cDM_{X} & g_{0}^{-1}c(A_{X}(\cdot), \star) & g_{0}^{-1}C_{X}(\cdot,g_{0}^{-1}(\star)) \\
A_{X}(\star) & \cD_{X}^{\g} & - A_{X}(g_{0}^{-1}(\star)) \\
C_{X}(\cdot,\star) & -c(A_{X}(\cdot),\star) & \cDM_{X}
\end{pmatrix}, \\
\label{eq_hcd'2} \hcD'_{\Phi} &= \begin{pmatrix}
N_{\Phi}(\star) & g_{0}^{-1}c(B'_{\Phi}(\cdot),\star) & g_{0}^{-1} B_{\Phi}(\cdot,g_{0}^{-1}(\star)) \\
B'_{\Phi}(\star) & Q_{\Phi}(\star) & -B'_{\Phi}(g_{0}^{-1}(\star)) \\
B_{\Phi}(\cdot,\star) & -c(B'_{\Phi}(\cdot),\star) & -N_{\Phi}^{T}(\star) 
\end{pmatrix}, \\
\label{eq_hcd'3} \hcD'_{\xi} & = \begin{pmatrix}
N'_{\xi}(\star) & g_{0}^{-1}c(B''_{\xi}(\cdot),\star) & g_{0}^{-1} V_{\xi}(\cdot,g_{0}^{-1}(\star)) \\
B''_{\xi}(\star) & Q'_{\xi}(\star) & - B''_{\xi}( g_{0}^{-1}(\star)) \\
V_{\xi}(\cdot,\star) & -c(B''_{\xi}(\cdot),\star) & -{N'}_{\xi}^{T}(\star)
\end{pmatrix},
\end{align}
where $\cDM$ has to be a connection on $M$ metric compatible with $g_{0}$, and $\cD^{\g}$ has to be a connection on the vector bundle $\g_{P}$ metric compatible with $\<\cdot,\cdot\>_{\g}$. 
Next, all bottom-left corners are induced by $2$-forms on $M$, that is $C_{X},B_{\Phi},V_{\xi} \in \df{2}$. Finally, $N_{\Phi}$ and $N'_{\xi}$ have to be skew-symmetric with respect to $g_{0}$, and $Q_{\Phi},Q'_{\xi} \in \End(\g_{P})$ skew-symmetric with respect to the Killing form $\<\cdot,\cdot\>_{\g}$. 

We will not analyze the torsion-freeness condition in detail, as it is analogous to Section \ref{sec_LC}. Let us summarize the result in the form of a theorem. Denote by $H''_{0}$ the $3$-form on $M$ defined as
\begin{equation} \label{def_H''_0}
H''_{0} = H'_{0} + dB_{0} = H_{0} - \frac{1}{2}\tilde C_{3}(A') - \<F \^ A'\>_{\g} + dB_{0}.
\end{equation}

\begin{theorem} \label{thm_heteroticLC}
Let $\hcD'$ be a Courant algebroid connection on $E'$ metric compatible with the generalized metric $\G_{E'}$, parametrized as in (\ref{eq_hcd'1} - \ref{eq_hcd'3}). Then $\hcD'$ is torsion-free with respect to the twisted heterotic bracket $[\cdot,\cdot]'_{E'}$, if and only if 
\begin{align}
\cDM_{X}Y &= \cD^{0}_{X}Y + g_{0}^{-1}K'(X,Y,\cdot), \\
V_{\xi}(X,Y) &= -K'(g_{0}^{-1}(\xi),X,Y), \\
N'_{\xi}(Y) &= \frac{1}{6} H''_{0}(g_{0}^{-1}(\xi),Y,\cdot) - J'(\xi,g_{0}(Y),\cdot), \\
C_{X}(Y,Z) &= \frac{1}{3}H''_{0}(X,Y,Z) + J'(g_{0}(X),g_{0}(Y),g_{0}(Z)), \\
Q_{\Phi}(\Phi') &= -\frac{1}{3} [\Phi,\Phi']_{\g} + c^{-1}\mathfrak{j}(\Phi,\Phi',\cdot), \\
A_{X}(Y) &= \frac{1}{2}(h_{0} + C'_{0})(X,Y), \\
B''_{\xi}(Y) & = -\frac{1}{2}( h_{0} + C'_{0} + F')(g_{0}^{-1}(\xi),Y), \\
B_{\Phi}(X,Y) & = \< (C'_{0} + F')(X,Y), \Phi\>_{\g}, \\
N_{\Phi}(X) &= g_{0}^{-1} \< (C'_{0} + \frac{1}{2}F')(X,\cdot), \Phi \>_{\g}, \\
\< \cD_{X}^{\g}\Phi, \Phi'\>_{\g} & = \< \cD'_{X}\Phi, \Phi'\>_{\g} + \< \mathfrak{c}_{0}(\Phi,\Phi'), X\>, \\
\< Q'_{\xi}(\Phi),\Phi'\>_{\g} &= - \<\mathfrak{c}_{0}(\Phi,\Phi'), g_{0}^{-1}(\xi) \>, \\
\< B'_{\Phi}(X), \Phi' \>_{\g} &= \frac{1}{2}\< (\h_{0} + \mathfrak{c}_{0})(\Phi,\Phi'), X\>,
\end{align}
where the fields on the right-hand side are subject to the following conditions:
\begin{enumerate}
\item $\cDN$ is the Levi-Civita connection corresponding to $g_{0}$.
\item $K' \in \df{1} \otimes \df{2}$, such that $K'_{a} = 0$,
\item $J' \in \vf{} \otimes \vf{2}$, such that $J'_{a} = 0$, 
\item $\mathfrak{j} \in \Omega^{1}(\g_{P}) \otimes \Omega^{2}(\g_{P})$, such that $\mathfrak{j}_{a} = 0$, 
\item $h_{0} \in \Gamma(S^{2}T^{\ast}M) \otimes \Gamma(\g^{\ast}_{P})$, and $C'_{0} \in \Omega^{2}(M) \otimes \Gamma(\g^{\ast}_{P})$, otherwise arbitrary,
\item $\h_{0} \in \Gamma(S^{2} \g^{\ast}_{P}) \otimes \vf{}$, $\mathfrak{c}_{0} \in \Omega^{2}(\g^{\ast}_{P}) \otimes \vf{}$, otherwise arbitrary.
\end{enumerate}
\end{theorem}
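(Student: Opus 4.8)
The plan is to mirror the derivation of Section~\ref{sec_LC}, now carried out with $3\times 3$ blocks. Since the block form (\ref{eq_hcd'1}--\ref{eq_hcd'3}) already encodes compatibility with both $\G_{E'}$ and $g_{E'}$, the only condition left to impose is torsion-freeness with respect to the twisted heterotic bracket (\ref{eq_bracket'E'}). By Lemma~\ref{lem_torsgualt} applied to $E'$, the associated $3$-form $T'_{G}(\psi,\psi',\psi'') := \<T'(\psi,\psi'),\psi''\>_{E'}$ is completely skew-symmetric and $\cif$-multilinear, so the vanishing of the torsion can be tested on sections drawn from the three summands $TM$, $\g_{P}$, $T^{\ast}M$ of $E'$. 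First I would compute $T'(\psi,\psi')$, project onto each summand, and thereby convert $T'=0$ into a finite family of tensorial equations generalizing (\ref{eq_tf1}--\ref{eq_tf4}); the bracket (\ref{eq_bracket'E'}) feeds the curvature $F'$, the fiber bracket $[\cdot,\cdot]_{\g}$, the shifted connection $\cD'$, and the $3$-form $H''_{0}$ into these equations, while the last (frame-dependent) term of the torsion operator supplies the symmetric corrections relating the off-diagonal blocks.

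Next I would solve the equations sector by sector. The components with all inputs in $TM\oplus T^{\ast}M$ decouple from the fiber and are formally identical to the exact case: the $(X,Y)\to TM$ component fixes $V_{\xi}$ through the torsion of $\cDM$ and, after skew-symmetrization, forces the totally antisymmetric part of the associated contortion to vanish, giving the parametrization of $\cDM$ and $V_{\xi}$ by $K'$ with $K'_{a}=0$; the remaining tangent/cotangent components then fix $C_{X}$ and $N'_{\xi}$ in terms of $H''_{0}$ and a residual $J'$ with $J'_{a}=0$, reproducing the first four relations verbatim. The purely $\g_{P}$-valued component, into which the fiber bracket enters, determines $Q_{\Phi}$ as $-\tfrac{1}{3}[\Phi,\Phi']_{\g}$ plus a residual $\mathfrak{j}$ with $\mathfrak{j}_{a}=0$, in exact analogy with the way $H''_{0}$ and $J'$ enter $C_{X}$.

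The genuinely new work lies in the mixed components that couple $\g_{P}$ to $TM$ and $T^{\ast}M$, where the curvature $F'$ and the shifted connection $\cD'$ appear as fixed inhomogeneous terms. Here I would split each equation into its symmetric and skew-symmetric parts in the tangent (respectively fiber) indices: the antisymmetric parts are what tie the several blocks together through $F'$, whereas torsion-freeness leaves undetermined precisely the symmetric data $h_{0}$, $\h_{0}$ and the two-form data $C'_{0}$, $\mathfrak{c}_{0}$. These are the free parameters that surface in $A_{X}, B''_{\xi}, B_{\Phi}, N_{\Phi}$, and, on the other mixed sector, in the difference $\cD^{\g}-\cD'$ together with $Q'_{\xi}$ and $B'_{\Phi}$. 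The main obstacle is exactly this disentangling: unlike the exact case, where a single totally skew contortion tensor captured all the freedom, the heterotic mixing splits into several coupled blocks with $F'$-dependent off-diagonal coupling, so one must simultaneously verify that the proposed expressions respect the skew-symmetry constraints already imposed on $N_{\Phi}, N'_{\xi}, Q_{\Phi}, Q'_{\xi}$ in (\ref{eq_hcd'1}--\ref{eq_hcd'3}) and that $h_{0}, C'_{0}, \h_{0}, \mathfrak{c}_{0}$ remain genuinely unconstrained. It is the consistency of these coupled conditions, rather than any individual computation, that requires the real care.
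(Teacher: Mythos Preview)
Your proposal is correct and follows precisely the approach the paper itself indicates: the paper explicitly declines to analyze the torsion-freeness condition in detail, stating only that it is analogous to Section~\ref{sec_LC} carried out with $3\times 3$ blocks, and then records the result. Your sector-by-sector plan --- first recovering the exact-case relations for the $TM\oplus T^{\ast}M$ components, then the purely $\g_{P}$ sector with the Lie bracket playing the role of $H''_{0}$, and finally the mixed sectors where $F'$ and $\cD'$ enter as inhomogeneous terms and the symmetric/antisymmetric splitting isolates the free parameters $h_{0}, C'_{0}, \h_{0}, \mathfrak{c}_{0}$ --- is exactly the computation the paper omits.
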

\begin{example} \label{ex_LCminialonE'}
Unlike in the exact case, we will not proceed with the general case. Instead, we will consider only the "minimal connection", where we put most of the parametrizing fields to zero. We will comment on our choice of relating fields $C'_{0}$ and $F'$ later. Also, we postpone the discussion of the dilaton for now.  We put
\begin{equation}
K' = J' = \mathfrak{j} = h_{0} = \h_{0} = \mathfrak{c}_{0} = 0, \; C_{0} = -F'. 
\end{equation}
Let us now write the resulting connection $\hcD'$. We get
\begin{equation} \label{eq_exhLC1}
\hcD'_{X} = \begin{pmatrix}
\cDN_{X} & \frac{1}{2} g_{0}^{-1}\<F'(X),\star\>_{\g} & -\frac{1}{3}g_{0}^{-1}H''_{0}(X,g_{0}^{-1}(\star),\cdot) \\
-\frac{1}{2}F'(X,\star) & \cD'_{X} & \frac{1}{2}F'(X,g_{0}^{-1}(\star)) \\
-\frac{1}{3} H''_{0}(X,\star,\cdot) & -\frac{1}{2}\<F'(X),\star\>_{\g} & \cDN_{X}
\end{pmatrix},
\end{equation}
\begin{equation} \label{eq_exhLC2}
\hcD'_{\Phi} = \begin{pmatrix}
\frac{1}{2}g_{0}^{-1}\<F'(\star),\Phi\>_{\g} & 0 & 0 \\
0 & -\frac{1}{3}[\Phi,\star]_{\g} & 0 \\
0 & 0 & \frac{1}{2}\<F'(g_{0}^{-1}(\star)),\Phi\>_{\g}
\end{pmatrix},
\end{equation}
\begin{equation} \label{eq_exhLC3}
\hcD'_{\xi} = \begin{pmatrix}
\frac{1}{6} g_{0}^{-1}H''_{0}(g_{0}^{-1}(\xi),\star,\cdot) & 0 & 0 \\
0 & 0 & 0 \\
0 & 0 & \frac{1}{6} H''_{0}(g_{0}^{-1}(\xi), g_{0}^{-1}(\star),\cdot)
\end{pmatrix}.
\end{equation}
\end{example}
\subsection{Heterotic curvature tensor, scalar curvature} \label{sec_hetcurv}
Let us now define a version of the curvature operator suitable for  Courant algebroid connections on heterotic Courant algebroids. Define
\begin{equation} \label{def_Riemann'}
R'(\psi,\psi')\psi'' := \cD'_{\psi} \cD'_{\psi'}\psi'' - \cD'_{\psi'}\cD'_{\psi}\psi'' - \cD'_{[\psi,\psi']_{E'}} \psi'' + \cD'_{\fK'(\psi,\psi')} \psi'',
\end{equation}
where the map $\fK'$ is defined as 
\begin{equation} \label{def_fK'}
\fK'(\psi,\psi') := \< \cD'_{\psi_{\lambda}} \psi, \psi' \>_{E'} \cdot \F_{0}(g_{E'}^{-1}(\psi^{\lambda})),
\end{equation}
with $(\psi_{\lambda})_{\lambda}$ being some local frame for $E'$, and $\F_{0}(X,\Phi,\xi) = (0, \frac{1}{2}\Phi, \xi)$ is the projection onto the last two factors of $E'$, followed by the multiplication by $\frac{1}{2}$ in the $\g_{P}$ component. As $\rho' \circ \fK' = 0$, it is easy to check that $R'$ so defined is $\cif$-linear in all inputs. Moreover, it is straightforward to see that Lemma \ref{lem_Riemann} holds also in the heterotic case. The reason for the strange $\frac{1}{2}$ in $\F_{0}$, which is tricky part of the definition of curvature, will be elucidated in Section \ref{sec_conreduction}. The Ricci tensor $\Ric'$ corresponding to $R'$ is now defined by the formula (\ref{def_Ricci}). As in the exact case, introduce the two scalar curvatures $\RS'$ and $\RS'_{E'}$ as
\begin{equation}
\RS' = \Ric'( \gm'^{-1}(\psi^{\lambda}), \psi_{\lambda}), \; \RS'_{E'} = \Ric'( g_{E'}^{-1}(\psi^{\lambda}), \psi_{\lambda}). 
\end{equation}

We will now proceed with the above scalar curvatures. We will consider the case of the "untwisted" Levi-Civita connection $\hcD'$ of the form (\ref{eq_exhLC1} - \ref{eq_exhLC3}). We will do it in a way analogous to the exact case. For, note that an analogue of formula (\ref{eq_relRhatR}) holds:
\begin{equation}
R'(\psi,\psi')\psi'' = e^{\C} [ \widehat{R}'(e^{-\C}(\psi), e^{-\C}(\psi')) e^{-\C}(\psi'') ] 
\end{equation}
with $\widehat{R}'$ is defined as 
\begin{equation}
\widehat{R}'(\psi,\psi')\psi'' = \hcD'_{\psi} \hcD'_{\psi'} \psi'' - \hcD'_{\psi'} \hcD'_{\psi} \psi'' - \hcD'_{[\psi,\psi']'_{E'}} \psi'' + \hcD'_{\hfK'(\psi,\psi')} \psi''.
\end{equation}
Note that $[\cdot,\cdot]'_{E'}$ is the twisted heterotic bracket (\ref{eq_bracket'E'}), and the corresponding map $\hfK'$ has the form
\begin{equation}
\hfK'(\psi,\psi') = \< \hcD'_{\psi_{\lambda}}\psi,\psi'\>_{E'} \cdot \F_{\C}( g_{E'}^{-1}(\psi^{\lambda})),
\end{equation}
where $\F_{\C} = e^{-\C} \circ \F_{0} \circ e^{\C}$. Let $\hRic{}'$ be the Ricci tensor defined using $\widehat{R}'$. From the definition, $\RS'$ and $\RS'_{E'}$ can now be  calculated as
\begin{equation}
\RS' = \hRic{}'( \G_{E'}^{-1}(\psi^{\lambda}), \psi_{\lambda}), \; \RS'_{E'} = \hRic{}'( g_{E'}^{-1}(\psi^{\lambda}), \psi_{\lambda}). 
\end{equation}
It is interesting that although $\C$ appears explicitly in the map $\hfK'$, it cancels out in the curvatures. It appears implicitly only  in the fields of the twisted heterotic bracket $[\cdot,\cdot]'_{E'}$. Note that this significant property is due to the correct choice of the map $\F_{0}$ in the definition of $\fK'$. For example, the naive choice $\F_{0}(X,\Phi,\xi) = (0,\Phi,\xi)$ would destroy this property. As in the exact case, we write down only the result in the form of a theorem. The calculation is straightforward, yet too long to be presented here in detail. The result can be summarized as follows:
\begin{theorem} \label{thm_curvaturehetfinal}
Let $\cD$ be the Levi-Civita connection corresponding to the generalized metric $\gm'$, such that the corresponding connection $\hcD'$ has the form (\ref{eq_exhLC1} - \ref{eq_exhLC3}). The scalar curvatures $\RS'$ and $\RS'_{E'}$ defined above have the explicit form
\begin{align}
\RS' & = \RS(g_{0}) + \frac{1}{4} \<F'_{kl}, F'^{kl}\>_{\g} - \frac{1}{12} (H''_{0})_{klm} (H''_{0})^{klm} + \frac{1}{6} \dim{\g}, \\
\RS'_{E'} &= - \frac{1}{6} \dim{\g},
\end{align}
where $\RS(g_{0})$ is the Ricci scalar of the metric $g_{0}$, $H''_{0}$ is given by (\ref{def_H''_0}), and $F' = F + \D_{\cD}A'$. The notation is the one of Section \ref{sec_hettwist}. 
\end{theorem}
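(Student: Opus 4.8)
The plan is to follow the blueprint of the exact case (Theorem~\ref{thm_scalarcurvatures}) as closely as possible. First I would pass from the twisted connection $\cD'$ and the heterotic bracket $[\cdot,\cdot]_{E'}$ to the block-diagonal connection $\hcD'$ and the twisted bracket $[\cdot,\cdot]'_{E'}$ of~(\ref{eq_bracket'E'}), using the $e^{\C}$-covariance of the curvature operator recorded just above the theorem: since $R'(\psi,\psi')\psi'' = e^{\C}[\widehat{R}'(e^{-\C}\psi,e^{-\C}\psi')e^{-\C}\psi'']$, the two scalar curvatures become $\RS' = \hRic{}'(\G_{E'}^{-1}(\psi^{\lambda}),\psi_{\lambda})$ and $\RS'_{E'} = \hRic{}'(g_{E'}^{-1}(\psi^{\lambda}),\psi_{\lambda})$, to be computed entirely from the explicit block expressions~(\ref{eq_exhLC1}--\ref{eq_exhLC3}). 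A crucial preliminary point, already flagged, is that although $\hfK'$ carries an explicit $\C$ through $\F_{\C} = e^{-\C}\circ\F_0\circ e^{\C}$, this dependence cancels in the contracted curvatures; I would verify this cancellation first, as it lets me work with the structurally simpler $\F_0$ and treat $\C$ as entering only implicitly, through the primed fields $F'$, $\cD'$ and $H''_0$ of the twisted bracket.

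Second, I would substitute the block forms into $\widehat{R}'(\psi,\psi')\psi''$ and organize the computation by the three factors $TM$, $\g_P$ and $T^{\ast}M$. The pure base-direction part (both slots and the output in $TM\oplus T^{\ast}M$) is formally identical to the exact computation with $H'$ replaced by $H''_0$, so it should reproduce $\RS(g_0) - \tfrac{1}{12}(H''_0)_{klm}(H''_0)^{klm}$ after tracing, with all $J$- and $K$-type correction terms absent since they vanish for the minimal connection. The mixed terms coupling $TM$ to $\g_P$ through the off-diagonal entries $\pm\tfrac{1}{2}F'(X,\star)$ and $\tfrac{1}{2}g_0^{-1}\langle F'(X),\star\rangle_{\g}$ are the new feature; upon taking the two derivatives, subtracting the bracket term $\hcD'_{[\psi,\psi']'_{E'}}$ whose $\g_P$-component contains $-F'(X,Y)$, and contracting, these should assemble into the Yang--Mills density $\tfrac{1}{4}\langle F'_{kl},F'^{kl}\rangle_{\g}$.

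Third, the genuinely new contribution is the purely internal block. The $\g_P\to\g_P$ entry $-\tfrac{1}{3}[\Phi,\star]_{\g}$ of $\hcD'_\Phi$, together with the $\g_P$-component $-[\Phi,\Phi']_{\g}$ of the twisted bracket, produces through the Jacobi identity terms of the schematic form $[[\Phi,\Phi'],\Phi'']$; after adding the $\hfK'$ correction and contracting against the Killing form one recognizes the Casimir of the adjoint representation, whose trace is $\dim{\g}$. Carrying the numerical prefactors (the $\tfrac{1}{9}$ from the two derivatives, the $\tfrac{1}{3}$ from the bracket term, and the $\hfK'$ piece) through the two different traces should yield the constant $+\tfrac{1}{6}\dim{\g}$ in $\RS'$ and, because the $\g_P$-block enters the Courant trace $\RS'_{E'}$ with the opposite weight while all geometric terms cancel there, the isolated $-\tfrac{1}{6}\dim{\g}$.

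The main obstacle is purely the bookkeeping: the curvature of a $3\times 3$ block connection with entries of three different tensor types must be assembled, and then each of the two scalars extracted by contracting with $\G_{E'}^{-1}$ and with $g_{E'}^{-1}$. The delicate points are (i) getting the rational coefficients to combine into exactly $\tfrac{1}{4}$, $-\tfrac{1}{12}$ and $\pm\tfrac{1}{6}$, which hinges on the $\tfrac{1}{2}$ built into $\F_0$ and on the $\tfrac{1}{3},\tfrac{1}{6}$ factors of the minimal connection, and (ii) correctly identifying the internal trace with $\dim{\g}$ via the Killing form rather than an uncontrolled structure-constant expression, which uses \emph{ad}-invariance of the pairing and the Jacobi identity. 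I expect no conceptual difficulty beyond that already handled in the exact case, only a substantially longer calculation.
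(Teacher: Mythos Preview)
Your proposal is correct and follows essentially the same approach as the paper, which itself does not present the calculation in detail but merely outlines it: pass to $\widehat{R}'$ via the $e^{\C}$-covariance, note that the explicit $\C$ in $\hfK'$ cancels in the scalar traces so only the twisted-bracket fields $F'$, $\cD'$, $H''_{0}$ survive, and then compute block by block. The paper singles out exactly the same delicate point you flag in (ii), namely that the $\pm\tfrac{1}{6}\dim\g$ arises because $\<\cdot,\cdot\>_{\g}$ is the Killing form, so that the structure-constant combination $\<\Phi^{a},[\Phi,[\Phi',\Phi_{a}]_{\g}]_{\g}\>$ appearing in the curvature can be rewritten as $\<\Phi,\Phi'\>_{\g}$ before tracing.
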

There is one important remark - the constant proportional to $\dim{\g}$ comes from the fact that $\<\cdot,\cdot\>_{\g}$ is the Killing form, i.e., we can use the equality
\begin{equation} \label{eq_Killing}
\<\Phi,\Phi'\>_{\g} = \< \Phi^{a}, [\Phi,[\Phi',\Phi_{a}]_{\g}]_{\g} \>, 
\end{equation}
where $(\Phi_{a})_{a=1}^{\dim{\g}}$ is some local frame on $\g_{P}$. It is the right-hand side which appears in the curvature operator, and the trace  using $\G_{E'}$ or $g_{E'}$ then gives the respective multiples of the number $\dim{\g}$. 
\begin{rem}
In example \ref{ex_LCminialonE'}, we have chosen $C_{0} = -F'$ . However, one can  show that any other choice  $C_{0} = \lambda \cdot F'$ for general $\lambda \in \R$ leads to the same scalar curvatures as in Theorem \ref{thm_curvaturehetfinal}. In particular, this  means that the prefactor $1/4$ in front of the term quadratic in $F'$ is in fact quite rigid, it cannot be changed by choosing a different "minimal" connection. 
\end{rem}
\begin{rem} \label{rem_hetdilaton}
There is a straightforward way to introduce the dilation, completely analogous to Section \ref{sec_dilaton}. Let $\phi_{0} \in \cif$ be the scalar function on $M$, and let $\varphi_{0} = 6/(1 - \dim{M}) \cdot d\phi_{0}$. Define the tensor $K' \in \df{1} \otimes \df{2}$ as 
\begin{equation} \label{eq_dilatonK'}
K'(X,Y,Z) = \frac{1}{6}\< \varphi_{0},Y\> \cdot g_{0}(X,Z) - \frac{1}{6} \<\varphi_{0},Z\> \cdot g_{0}(X,Y).
\end{equation}
Let $\cD'$ be the heterotic Levi-Civita connection as in Example \ref{ex_LCminialonE'}, modified by choosing a non-trivial tensor $K'$ defined in Theorem \ref{thm_heteroticLC} to take the form (\ref{eq_dilatonK'}). The resulting curvatures get the same dilaton terms as in Section \ref{sec_dilaton}, that is
\begin{equation}
\RS' = \RS'|_{K'=0} + 4 \Delta_{0}\phi_{0} - 4 \rVert d \phi_{0} \rVert^{2}_{g_{0}}, \; \RS'_{E} = \RS'_{E}|_{K'=0}. 
\end{equation}
\end{rem}
\begin{rem}Similarly as in the exact case, we could now compare our Ricci tensor $\Ric$ to $\GRic$ of \cite{2013arXiv1304.4294G}. Since this analogous to Section \ref{sec_RicvsGric}, we leave this to the reader.
\end{rem}
\section{Reduction from exact to heterotic} \label{sec_reduction}
\subsection{Connections relevant for reduction} \label{sec_conrel}
We have seen in Section \ref{sec_heteroticCourant} that any heterotic Courant algebroid $E'$ is obtained by the reduction of an exact Courant algebroid $E$ over a principal bundle $P$. Also, in Section \ref{sec_gmred} we have seen that any generalized metric $\gm'$ on $E'$ is obtained by the reduction of a relevant generalized metric $\gm$ on $E$. This suggests that a similar relation can be found on the level of Courant algebroid connections. 

Let $\gm$ be a generalized metric on $E$ relevant for reduction. See Section \ref{sec_gmred} for the implications of this assumption. Let $\cD$ be a Levi-Civita connection corresponding to this generalized metric. The necessary condition for the reduction is the compatibility with the group action. We say that $\cD$ is $G$-equivariant if 
\begin{equation} \label{eq_conequivariant}
[R(x), \cD_{e}e']_{D}^{H} = \cD_{[R(x),e]_{D}^{H}} e' + \cD_{e}[R(x),e']_{D}^{H}.
\end{equation} 
This condition is $C^{\infty}(P)$-linear in $e,e'$, and it is sufficient and necessary to ensure that $\cD$ can be restricted onto $\Gamma_{G}(E)$. It can be checked directly that a Levi-Civita connection $\cD$ is $G$-equivariant if and only if the tensors $J$ and $K$ of Theorem \ref{thm_generalLC} are $G$-invariant. 

Next, we would like $\cD$ to restrict onto $\Gamma(E') \cong \Gamma_{G}(K^{\perp})$ directly - that is without additional projection of the result onto $\Gamma(E')$. After a careful analysis of this assumption, one can arrive to the following definition. One says that a $G$-equivariant Levi-Civita connection $\cD$ is \emph{relevant for reduction}, if it has, with respect to the decomposition $\Gamma_{G}(E) \cong \Gamma(E') \oplus \Gamma(\g_{P})$, the following block form
\begin{equation} \label{eq_connrelevant}
\cD_{(\psi,\Phi)}  = \bm{\cD'_{\psi}}{0}{0}{ \pi_{2}( [\psi,I(\star)]_{E'}) - \frac{2}{3}[\Phi,\star]_{\g}},
\end{equation}
where $\cD'$ is a Courant algebroid connection on $E'$. It is not the most general Levi-Civita connection which restricts to $\Gamma(E')$, but this is the one which contains no additional data except for $\cD'_{\psi}$ and the brackets $[\cdot,\cdot]_{E'}$ and $[\cdot,\cdot]_{\g}$. It is defined so that $\cD$ is torsion-free with respect to $[\cdot,\cdot]_{D}^{H}$ if and only if $\cD'$ is torsion-free with respect to the heterotic bracket $[\cdot,\cdot]_{E'}$. Moreover, it is easy to see that $\cD'$ is metric compatible with the reduced generalized metric $\gm'$, hence $\cD'$ is a Levi-Civita connection on $E'$ with respect to $\gm'$. 

Compare the factor $-\frac{2}{3}$ to the factor $-\frac{1}{3}$ in (\ref{eq_exhLC2}), this is because of the factor $2$ in the bracket (\ref{eq_HDorfmandecomp}). Not every Levi-Civita connection corresponding to a relevant generalized metric $\gm$ is relevant for reduction. Of all examples, mention the "minimal" $J = K = 0$ Levi-Civita connection. In general (for $F' \neq 0$) it is {\emph{not}} relevant for reduction, as one can verify explicitly. Also, it is nether of the form (\ref{eq_connrelevant}), nor does it preserve the subspace $\Gamma_{G}(K^{\perp}) \cong \Gamma(E)$.

\begin{example} \label{ex_extensiontorelevant}
We will now construct an explicit example of a Levi-Civita connection relevant for the reduction. In particular, we will find its fields $J$ and $K$ and calculate its scalar curvatures. 

Assume that $\gm$ is a generalized metric on $E$ relevant for reduction. Let $\cD'$ be a Levi-Civita connection on $E'$ corresponding to the reduced generalized metric $\gm'$, such that the corresponding untwisted connection $\hcD'$ has the form (\ref{eq_exhLC1} - \ref{eq_exhLC3}). Now, \emph{define} the connection $\cD$ on $E$ by the formula (\ref{eq_connrelevant}). It follows that $\cD$ is an equivariant Levi-Civita connection on $E$ with respect to the generalized metric $\gm$, reducing to $\cD'$ on $E'$. Instead of $\cD$, let us examine the connection $\ocD$ defined by
\begin{equation} 
\cD_{e}e' = e^{\B}( \ocD_{e^{-\B}(e)} e^{-\B}(e')),
\end{equation} 
for all $e,e' \in \Gamma_{G}(E)$, where $e^{-\B}$ is the map defined and discussed at the end of Section \ref{sec_gmred}. It is straightforward to show that it has the block form
\begin{equation} \label{eq_connrelevant2}
\ocD_{(\psi,\Phi)} = \bm{ \hcD'_{\psi}}{0}{0}{\pi_{2}([\psi,I(\star)]'_{E'}) - \frac{2}{3}[\Phi,\star]_{\g}}.
\end{equation}
The reason why we use this connection is that it is related to the untwisted connection $\hcD$ by the orthogonal transformation $e^{\A'}$, which is block diagonal with the original splitting $E = TP \oplus T^{\ast}P$. Note that $e^{\A'}$ is just a change of frame in $TP$, accompanied with the corresponding change of dual frame in $T^{\ast}P$. It follows that $\ocD$ has to have the form
\begin{align}
\label{eq_ocDexpl1} \ocD_{X} &= \bm{\overline{\cD}^{0}_{X} + g'^{-1}\overline{K}(X,\star,\cdot)}{-\frac{1}{3} g'^{-1}\overline{H}'(X,g'^{-1}(\star),\cdot) - \overline{J}(g'(X),\star,\cdot)}{-\frac{1}{3}\overline{H}'(X,\star,\cdot) - g'\overline{J}(g'(X),g'(\star),\cdot)}{\overline{\cD}^{0}_{X} + \overline{K}(X,g'^{-1}(\star),\cdot)}, \\
\label{eq_ocDexpl2} \ocD_{\xi} & = \bm{ \frac{1}{6} g'^{-1}\overline{H}'(g'^{-1}(\xi),\star,\cdot) - \overline{J}(\xi,g'(\star),\cdot)}{ g'^{-1}\overline{K}(g'^{-1}(\xi),g'^{-1}(\star),\cdot)}{\overline{K}(g'^{-1}(\xi),\star,\cdot)}{ \frac{1}{6} \overline{H}'(g'^{-1}(\xi),g'^{-1}(\star),\cdot) - g'\overline{J}(\xi,\star,\cdot)},
\end{align}
where $g'$ is the metric defined by $g = (e^{A'})^{T} g' e^{A'}$ and $\overline{\cD}^{0}_{X}$ is the corresponding Levi-Civita connection (with respect to the twisted Lie algebroid bracket on $TP$). Tensors with bars above them are $e^{A'}$-twists of the ones parametrizing the Levi-Civita connection $\hcD$ in Theorem \ref{thm_generalLC}, that is 
\begin{align}
\overline{H}'(X,Y,Z) & = H'(e^{-A'}(X),e^{-A'}(Y),e^{-A'}(Z)), \\
\overline{K}(X,Y,Z) &= K(e^{-A'}(X), e^{-A'}(Y), e^{-A'}(Z)), \\
\overline{J}(\xi,\eta,\zeta) &= J((e^{A'})^{T}(\xi), (e^{A'})^{T}(\eta), (e^{A'})^{T}(\zeta)).
\end{align}
Let us now calculate $\overline{K}$ and $\overline{J}$ in order to determine the curvature of the connection $\cD$. We will use the notation $(X,\Phi) \in \Gamma_{G}(TP)$ and $(\xi,\Psi) \in \Gamma_{G}(T^{\ast}P)$. The tensor $\overline{K}$ is calculated as
\begin{equation}
\overline{K}( g'^{-1}(\xi,\Psi) , (Y,\Phi'), (Z,\Phi'')) = \< \ocD_{(\xi,\Psi)}(Y,\Phi'),(Z,\Phi'')\>_{E}. 
\end{equation}
Plugging into the the block form (\ref{eq_connrelevant2}), we obtain
\begin{equation}
\begin{split}
\ocD_{(\xi,\Psi)}(Y,\Phi') & = \fPsi\{ \hcD'_{(0,c^{-1}(\Psi),\xi)}(Y,\frac{1}{2}\Phi',0)\} \\
& + R\{ \pi_{2}( [(0,c^{-1}(\Psi),\xi), I( \frac{1}{2}\Phi')]'_{E'}) - \frac{2}{3}[ -c^{-1}(\Psi), \frac{1}{2}\Phi']_{\g}\}.
\end{split}
\end{equation}
Evaluating this gives
\begin{equation}
\ocD_{(\xi,\Psi)}(Y,\Phi') = (\frac{1}{2} g_{0}^{-1}\<F'(Y),\Psi\> + \frac{1}{6}g_{0}^{-1}H''_{0}(g_{0}^{-1}(\xi),Y,\cdot), -\frac{1}{3}[c^{-1}\Psi,\Phi']_{\g},0,0).
\end{equation}
As this has no $\Gamma_{G}(T^{\ast}P)$ part, it follows that 
\begin{equation}
\overline{K}(g'^{-1}(\xi,\Psi), (Y,\Phi'),(Z,\Phi'')) = 0. 
\end{equation}
The calculation of $\overline{J}$ is more elaborate, as one has to calculate first  the twisted $3$-form $\overline{H}'$. As it is straightforward, we omit the explicit calculation here. The resulting tensor $\overline{J}$ is 
\begin{equation}
\begin{split}
\overline{J}((\xi,\Psi),(\eta,\Psi'),(\zeta,\Psi'')) &= \frac{1}{3}\<\Psi, F'(g_{0}^{-1}(\eta),g_{0}^{-1}(\zeta))\> \\
& + \frac{1}{6}\<\Psi', F'(g_{0}^{-1}(\xi), g_{0}^{-1}(\zeta))\> - \frac{1}{6}\<\Psi'', F'(g_{0}^{-1}(\xi), g_{0}^{-1}(\eta)\>. 
\end{split}
\end{equation}
Now note that according to Theorem \ref{thm_scalarcurvatures}, only the partial traces $J'$ and $K'$ contribute to the scalar curvatures. Let $\overline{J}{}'$ be the partial trace of $\overline{J}$ using the metric $g'$. It follows that 
\begin{equation}
J'(\xi) = \overline{J}{}\textsl{}'( (e^{-A'})^{T}(\xi)). 
\end{equation}
It is now easy to see that $\overline{J}{}' = 0$, and consequently also $J' = 0$. Finally, we conclude from Theorem \ref{thm_scalarcurvatures} that the connection $\cD$ defined by (\ref{eq_connrelevant}) with $\cD'$ in the form (\ref{eq_hcd'1} - \ref{eq_hcd'3}) has the scalar curvatures
\begin{equation}
\RS = \RS(g) - \frac{1}{12} H'_{ijk} {H'}^{ijk}, \; \RS_{E} = 0.
\end{equation}
\end{example}

\subsection{Reduction of curvatures} \label{sec_conreduction}
Let $\cD$ be an equivariant Courant algebroid connection on $E$ relevant for reduction. By definition, it induces a Courant algebroid connection on $E'$. As it is formed from $G$-invariant objects, the corresponding curvature operator $R$ defined by (\ref{def_Riemann}) will also be  $G$-invariant. As the both fiber-wise metrics $\gm$ and $g_{E}$ are also $G$-invariant, so are the both scalar curvatures $\RS$ and $\RS_{E}$. 
For a suitable definition of the curvature operator $R'$ for the reduced connection $\cD'$, $R$ can be related to $R'$, and consequently also the respective scalar curvatures. 

To start, we have to discuss how  $\fK$ defined by (\ref{def_fK}) decomposes with respect to the splitting $\Gamma_{G}(E) \cong \Gamma(E') \oplus \Gamma(\g_{P})$. As $g_{E}$ is block diagonal with respect to this decomposition, the only important object is the projection $pr_{2} \in \End(E)$. Let $\psi = (X,\Phi,\xi) \in \Gamma(E')$ and $\Phi' \in \Gamma(\g_{P})$. Then the $G$-invariant section of $E$ corresponding to the pair $(\psi,\Phi') \in \Gamma(E') \oplus \Gamma(\g_{P})$ is 
\[ \fPsi(\psi) + R(\Phi') = (X,\Phi + \Phi',\xi, \frac{1}{2}c(\Phi - \Phi')). \]
Applying the projection gives $pr_{2}(\fPsi(\psi) + R(\Phi')) = (0,0,\xi,\frac{1}{2}c(\Phi - \Phi'))$. But this decomposes as
\[ (0,0,\xi,\frac{1}{2}c(\Phi - \Phi')) = \fPsi(0,\frac{1}{2}(\Phi - \Phi'),\xi) + R( \frac{1}{2}(\Phi' - \Phi)). \]
The map $pr_{2}$ can thus be  written as 
\begin{equation}
pr_{2}((X,\Phi,\xi),\Phi') = ((0,\frac{1}{2}(\Phi - \Phi'),\xi), \frac{1}{2}(\Phi' - \Phi)). 
\end{equation}
In the block form, we have the matrix
\begin{equation}
pr_{2} = \bm{\F_{0}}{ -\frac{1}{2}I}{-\frac{1}{2} \pi_{2}}{\frac{1}{2}},
\end{equation}
where $\F_{0}$ is exactly the map we have used in order to define $\fK'$ in (\ref{def_fK'}), $I: \g_{P} \rightarrow E'$ is the inclusion, and $\pi_{2}: E' \rightarrow \g_{P}$ the projection. This finally explains our choice in the definition of the curvature operator $R'$ of the connection $\cD'$. Now recall that $g_{E}$ has the block form (\ref{eq_taugEblock}), that is there is $-c$ in the $\g_{P}$ block. This is in contrast with $g_{E'}$ which has $c$ in its $\g_{P}$ block. Plugging in, one finds
\begin{align}
\fK((\psi,0),(\psi',0)) &= ( \fK'(\psi,\psi'), - \frac{1}{2} \< \cD'_{\Phi_{a}}\psi,\psi'\>_{E'} \cdot c^{-1}(\Phi^{a})), \\
\fK((\psi,0),(0,\Phi')) &= (0,0), \\
\fK((0,\Phi),(0,\Phi')) &= ((0, \frac{5}{6}[\Phi,\Phi']_{\g}, - \< \cD \Phi, \Phi'\>_{\g}), - \frac{5}{6}[\Phi,\Phi']_{\g}).
\end{align}
Instead of calculating the tensor $R'$, we skip the step and write the result for the block diagonal components of Ricci tensor. One obtains
\begin{align}
\Ric((\psi,0),(\psi',0)) &= \Ric'(\psi,\psi'), \\
\Ric((0,\Phi),(0,\Phi')) &= - \frac{1}{6}\<\Phi,\Phi'\>_{\g},
\end{align}
where $\mbox{Ric'}$ is the Ricci tensor corresponding to the curvature operator $R'$ of the connection $\cD'$ defined by (\ref{def_Riemann'}). Note that, once more, we have used  that the Killing form $\<\cdot,\cdot\>_{\g}$ can be written as in (\ref{eq_Killing}). Taking the trace of the Ricci tensor, we arrive to the following theorem.

\begin{theorem} \label{thm_scalarreduction}
Let $\gm$ be a generalized metric on $E$ relevant for reduction, and $\gm'$ the induced generalized metric on $E'$. Further, let  $\cD$ be an equivariant Courant algebroid connection on $E$ relevant for the reduction, and let $\cD'$ be the induced Courant algebroid connection on the reduced Courant algebroid $E'$. 

Let $\RS$ and $\RS_{E}$ be the two scalar curvatures of $\cD$, defined using $\gm$ and $g_{E}$ and the curvature operator $R$ defined as in (\ref{eq_Killing}). Let $\RS'$ and $\RS'_{E'}$ be the two scalar curvatures of $\cD'$, defined using $\gm'$ and $g_{E'}$ and the curvature operator $R'$ defined as in (\ref{def_Riemann'}). Then there holds the relation
\begin{equation} \label{eq_relationsRS}
\RS = \RS' \circ \pi + \frac{1}{6} \dim{\g}, \; \RS_{E} = \RS'_{E'} \circ \pi + \frac{1}{6} \dim{\g}.
\end{equation}
\end{theorem}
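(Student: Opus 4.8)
The plan is to reduce the statement to a computation of the block-diagonal components of the Ricci tensor $\Ric$ with respect to the decomposition $\Gamma_{G}(E) \cong \Gamma(E') \oplus \Gamma(\g_{P})$, and then to take the two traces, exploiting that both fibre-wise metrics $\gm$ and $g_{E}$ are block-diagonal with respect to this splitting. Since $\gm$ is relevant for reduction and $\cD$ is relevant for reduction, the latter has the block form (\ref{eq_connrelevant}), so the curvature operator $R$ of (\ref{def_Riemann}) and hence $\Ric$ will again respect the splitting; this is exactly what will make the two scalar traces separate cleanly into an $E'$-piece and a $\g_{P}$-piece. $G$-invariance of all the ingredients moreover guarantees that $R$, $\Ric$, $\RS$ and $\RS_{E}$ are $G$-invariant, so the $E'$-parts will descend to functions on $M$, accounting for the composition with $\pi$.

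First I would decompose the correcting map $\fK$ of (\ref{def_fK}). Because $g_{E}$ is block-diagonal and equals $-c$ on the $\g_{P}$-block by (\ref{eq_taugEblock}), the only nontrivial ingredient is the projection $pr_{2}$, whose block form with respect to $\Gamma(E')\oplus\Gamma(\g_{P})$ is the matrix displayed just before the theorem. Feeding this, together with the block form (\ref{eq_connrelevant}) of $\cD$, into (\ref{def_fK}) yields the three stated expressions for $\fK$ on pairs of sections. The appearance of $\F_{0}$ in the $(E',E')$ entry is precisely what identifies the $E'$-component of $\fK$ with the map $\fK'$ of (\ref{def_fK'}), and the coefficient $\tfrac12$ built into $\F_{0}$ is what will guarantee that the explicit $\C$-dependence cancels at the level of curvatures.

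Next, I would substitute this decomposition of $\fK$ and the connection (\ref{eq_connrelevant}) into (\ref{def_Riemann}) and contract to extract the two block-diagonal pieces of $\Ric$. The $(E',E')$ piece reproduces $\Ric'$ of the reduced connection essentially by construction, since on $G$-invariant sections lying in $\Gamma(E')$ the bracket reduces to $[\cdot,\cdot]_{E'}$ and the $\F_{0}$-part of $\fK$ reduces to $\fK'$. The $(\g_{P},\g_{P})$ piece requires more care: collecting the contributions of the $-\tfrac23[\Phi,\cdot]_{\g}$ term of (\ref{eq_connrelevant}), the $\tfrac56[\Phi,\Phi']_{\g}$ term produced by $\fK$, and the factor $2$ appearing in the decomposed bracket (\ref{eq_HDorfmandecomp}), the nested brackets recombine, via the Killing-form identity (\ref{eq_Killing}), into $-\tfrac16\<\Phi,\Phi'\>_{\g}$. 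Verifying that these numerical factors conspire to exactly $-\tfrac16$ is the main obstacle, and it is also where the choice of the coefficient $\tfrac12$ in $\F_{0}$ and of $-\tfrac23$ in (\ref{eq_connrelevant}) is pinned down. Since the metrics are block-diagonal, only these two block-diagonal components of $\Ric$ can enter the traces, so no cross-block terms need be tracked.

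Finally, I would evaluate the traces defining $\RS = \Ric(\gm^{-1}(e^{\lambda}),e_{\lambda})$ and $\RS_{E} = \Ric(g_{E}^{-1}(e^{\lambda}),e_{\lambda})$. Choosing a frame adapted to $\Gamma(E')\oplus\Gamma(\g_{P})$, block-diagonality splits each trace into an $E'$-part, which gives $\RS'\circ\pi$ (respectively $\RS'_{E'}\circ\pi$, using that $\gm$ restricts to $\gm'$ and $g_{E}$ to $g_{E'}$ on $K^{\perp}\cong E'$), and a $\g_{P}$-part. For the latter I would take a frame $(\Phi_{a})$ of $\g_{P}$ orthonormal for $-c$; then $\<\Phi_{a},\Phi_{a}\>_{\g} = c(\Phi_{a},\Phi_{a}) = -1$, so the $\g_{P}$-contribution equals $-\tfrac16\sum_{a} c(\Phi_{a},\Phi_{a}) = +\tfrac16\dim{\g}$. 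Because both $\gm$ and $g_{E}$ restrict to $-c$ on the $\g_{P}$-block, the identical correction $+\tfrac16\dim{\g}$ appears in both scalar curvatures, which yields exactly (\ref{eq_relationsRS}).
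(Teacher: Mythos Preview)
Your proposal is correct and follows essentially the same approach as the paper: decompose $pr_{2}$ and hence $\fK$ with respect to $\Gamma_{G}(E)\cong\Gamma(E')\oplus\Gamma(\g_{P})$, identify the $E'$-block with $\fK'$ via the appearance of $\F_{0}$, compute the block-diagonal Ricci entries $\Ric'(\psi,\psi')$ and $-\tfrac16\<\Phi,\Phi'\>_{\g}$ (the latter via the Killing identity (\ref{eq_Killing})), and then trace using the block-diagonality of $\gm$ and $g_{E}$. You in fact spell out the final trace step over $\g_{P}$ more explicitly than the paper does.
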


\begin{example}
Let $\cD'$ be the Levi-Civita connection on $E'$ from Example \ref{ex_LCminialonE'}. We have calculated its scalar curvatures $(\RS',\RS'_{E'})$ in Theorem \ref{thm_curvaturehetfinal}. On the other hand, we have found the corresponding Levi-Civita connection on $E$ which reduces to $\cD'$ on $E'$ in Example \ref{ex_extensiontorelevant}, and calculated its scalar curvatures $(\RS,\RS_{E})$. 
Using the theorem \ref{thm_scalarreduction} gives the following two equations:
\begin{align}
\label{eq_connrelfin1} \RS(g) - \frac{1}{12} H'_{ijk} H'^{ijk} & = \{ \RS(g_{0}) + \frac{1}{4} \<F'_{kl}, F'^{kl}\>_{\g} - \frac{1}{12} (H''_{0})_{klm} (H''_{0})^{klm} \} \circ \pi + \frac{1}{3} \dim{\g}, \\
\label{eq_connrelfin2} 0 & = (-\frac{1}{6} \dim{\g}) \circ \pi + \frac{1}{6} \dim{\g},
\end{align}
where the latter one is in this case merely a consistency check (which however proved to be useful to catch the minus signs lost on the way). Recall that 
\begin{align}
F' &= F + \D_{\cD}A', \\
H' &= dB + \pi^{\ast}(H_{0}) + \frac{1}{2} CS_{3}(A), \\
H''_{0} &= dB_{0} + H_{0} - \frac{1}{2} \tilde{C}_{3}(A') - \<F \^ A'\>_{\g}.
\end{align}
The definitions of all the objects can be found in Section \ref{sec_hettwist}.
\end{example}
\begin{rem}
In fact, the metrics $g$ and $g_{0}$ in fact do not need to be positive definite. Only, one has to \emph{assume} that there exists a decomposition of $g$ in the form (\ref{eq_gBdecompositions}). The whole reduction process can thus be, with this assumption, generalized to include any metric manifold.
\end{rem}
\begin{rem}
Note that in the entire construction, one \emph{does not} use the full Leibniz identities of any of the involved Courant algebroids. The only important property of the bracket required for the definition of the torsion and curvature is the homomorphism property (\ref{eq_anchorhom}). Thus, we can  assume that $(E,\rho,\<\cdot,\cdot\>_{E},[\cdot,\cdot]_{E})$ is a pre-Courant algebroid as defined in \cite{2004math......7399V} and studied in detail in \cite{2012arXiv1205.5898L}. For a pre-Courant algebroid, one drops the Leibniz identity (\ref{def_courant2}), and keeps only its consequence (\ref{eq_anchorhom}). 

We thus can  consider $E$ to be an $H$-twisted Dorfman bracket, where $H$ is \emph{not closed}. Consequently $H_{0}$ does not need to satisfy the condition (\ref{eq_heteroticmain}). Hence, we can consider also principal bundles with non-vanishing first Pontryagin classes, all of the above remaining valid. 
\end{rem}
\begin{rem}
Everything can be worked out for the Levi-Civita connections with dilatons, yet it is not completely straightforward. Let us start on $E$, and impose the natural condition on the dilaton function $\phi \in C^{\infty}(P)$, namely $\phi = \pi^{\ast}(\phi_{0})$ for $\phi_{0} \in \cif$. Choose $K \equiv K_{\phi}$ to be (\ref{eq_Kdilaton}), and $J$ as in Example \ref{ex_extensiontorelevant}. Unfortunately, the resulting connection $\cD$ is not relevant for reduction. 

We thus employ the opposite approach. Consider $\cD'$ to be the heterotic Levi-Civita connection with dilaton $\phi'_{0}$ as in Remark \ref{rem_hetdilaton}. Note that at this point there is no reason for $\phi'_{0}$ and $\phi_{0}$ to be equal. Denote by $K'_{0}$ the tensor (\ref{eq_dilatonK'}) adding the dilaton. Now extend the connection $\cD'$ to the connection $\cD$ on $E$ relevant for reduction using the formula (\ref{eq_connrelevant}) as in Example \ref{ex_extensiontorelevant}. By construction, $\cD$ is a Levi-Civita connection on $E$. Repeating the procedures in Example \ref{ex_extensiontorelevant}, it is straightforward to calculate the corresponding tensors $J$ and $K$. One finds 
\begin{equation} \label{eq_extensiondilatonresult}
J = J|_{K'_{0} = 0}, \; K = \pi^{\ast}( K'_{0}), 
\end{equation}
where $\pi: P \rightarrow M$ is the principal bundle projection. Note that for no choice of $\phi'_{0}$, one has $K_{\phi} = \pi^{\ast}(K'_{0})$, which explains why the original connection with dilaton is not relevant for reduction. 

Fortunately, when talking about scalar curvatures, there is a certain freedom. In particular, instead of the tensor $K_{\phi}$ (\ref{eq_Kdilaton}), one can choose any tensor $\widehat{K}_{\phi}$, as long as $K'_{\phi} = \widehat{K}'_{\phi}$. By Theorem \ref{thm_scalarcurvatures}, this does not change the scalar curvatures $\RS$ and $\RS_{E}$. Recall our assumption $\phi = \pi^{\ast}(\phi_{0})$. We claim that one can choose $\widehat{K}_{\phi} = \pi^{\ast}(K_{0})$, where $K_{0} \in \df{1} \otimes \df{2}$ is defined as 
\begin{equation}
K_{0}(X,Y,Z) = \frac{1}{6} \<\varphi_{0},Y\> \cdot g_{0}(X,Z) - \frac{1}{6} \<\varphi_{0},X\> \cdot g_{0}(X,Y),
\end{equation}
for $\varphi_{0} = 6/(1 - \dim{M}) \cdot \phi_{0}$. We have to show that $K'_{\phi} = \widehat{K}'_{\phi}$. We know from Section \ref{sec_dilaton} that $K'_{\phi} = d\phi = \pi^{\ast}(d\phi_{0})$. On the other hand, using the connection induced splitting of $\Gamma_{G}(E)$ as in Section (\ref{sec_gmred}) and the block forms (\ref{eq_gBblocksfinal}), one has
\[
\widehat{K}'_{\phi}(Z,\Phi) = \widehat{K}_{\phi}( g^{-1}(dy^{k}),0), (\partial_{k},0), (Z,\Phi)) = K_{0}( g^{-1}(dy^{k}), \partial_{k}, Z) \circ \pi= \< d\phi_{0}, Z\> \circ \pi.
\]
This proves that we can use $\widehat{K}_{\phi}$ instead of $K_{\phi}$ to get the same scalar curvatures. Moreover, according to (\ref{eq_extensiondilatonresult}), this connection is relevant for reduction to the heterotic Levi-Civita connection, and $\phi'_{0} = \phi_{0}$. One can now generalize Theorem \ref{thm_scalarreduction} to include the scalar curvatures with dilatons. In fact, we get the direct equality of both dilaton terms:
\begin{equation}
4 \Delta \phi - 4 \rVert d\phi \rVert^{2}_{g} = \{ 4 \Delta_{0}\phi_{0} - 4 \rVert d\phi_{0} \rVert^{2}_{g_{0}} \} \circ \pi.
\end{equation}
\end{rem}
\section{Double field theory curvature tensor} \label{sec_dftRiemann}
We have added this section in response to an anonymous referee. We would like to thank him for pointing out to us a relevant and interesting paper \cite{Hohm:2012mf}, which we have missed originally. The definition of a generalized Riemann tensor given in that paper can easily be compared with our definition (\ref{def_Riemann}). Note that the definition of \cite{Hohm:2012mf} is suitable for any Courant algebroid, which is an advantage over our definition (\ref{def_Riemann}). 

Let $(E,\rho,\<\cdot,\cdot\>_{E},[\cdot,\cdot]_{E})$ be any Courant algebroid, let $\cD$ be a Courant algebroid connection, and let $R^{(0)}$ denote the ``naive" curvature operator:
\begin{equation}
R^{(0)}(e,e')e'' = \cD_{e}\cD_{e'} e'' - \cD_{e'} \cD_{e} e'' - \cD_{[e,e']_{E}}e'',
\end{equation}
for all $e,e',e'' \in \Gamma(E)$. As we already know, it is $\cif$-linear in $e'$ and $e''$ but not $\cif$-linear in $e$, as we obtain 
\begin{equation} \label{eq_R0ciflin}
R^{(0)}(fe,e')e'' = f \cdot R^{(0)}(e,e')e'' - \<e,e'\>_{E} \cD_{\D{f}}e''. 
\end{equation}
The simplest way to fix this is to define the map $\fL(e,e') = \<\cD_{e_{k}}e,e'\>_{E} \cdot g_{E}^{-1}(e^{k})$, where $(e_{\lambda})_{\lambda=1}^{k}$ is an arbitrary local frame on $E$. Note that this map is $\cif$-linear in $e'$, and $\fL(fe,e') = \<e,e'\>_{E} \cdot \D{f}$. This suggests to define 
\begin{equation}
R^{(1)}(e,e')e'' = R^{(0)}(e,e')e'' + \cD_{\fL(e,e')}e''. 
\end{equation}
However, such a modification of the ``naive" curvature operator $R^{(0)}$ destroys the $\cif$-linearity in $e''$. One gets 
\begin{equation} \label{eq_R1ciflin}
R^{(1)}(e,e')(fe'') =f R^{(1)}(e,e')e'' + (\rho(\fL(e,e')).f) e'' = f R^{(1)}(e,e')e'' + \< \cD_{\D{f}}e,e'\>_{E} \cdot e''. 
\end{equation}
In our approach as described above,\footnote{Note that this approach is applicable to any (local) Leibniz algebroid.} we have solved this by using $\fK$ instead of $\fL$, which has $\rho \circ \fK = 0$. 

The nice idea of \cite{Hohm:2012mf} is to add another correcting term, namely define the operator $\~R$ as 
\begin{equation}
\~R(e,e')e'' = R^{(1)}(e,e')e'' + \<e', R^{(0)}(e'',e_{\lambda})e \>_{E} \cdot g_{E}^{-1}(e^{\lambda}). 
\end{equation}
From (\ref{eq_R0ciflin}) and (\ref{eq_R1ciflin}), it follows that $\~R$ is $\cif$-linear in all inputs. Also, in \cite{Hohm:2012mf}, the authors use the covariant generalized Riemman tensor defined as\footnote{Observe that we use a more traditional convention for the definition of a covariant Riemann tensor.} $R_{HZ}(w',w,e,e') = \<w', \~R(e,e')w\>_{E}$, for all $e,e',w,w' \in \Gamma(E)$. Explicitly:
\begin{equation} \label{eq_RHZexplicit}
\begin{split}
R_{HZ}(w',w,e,e') = & \ \< w', \cD_{e}\cD_{e'}w - \cD_{e'}\cD_{e}w - \cD_{[e,e']_{E}}w \>_{E} \\
& + \< e', \cD_{w}\cD_{w'}e - \cD_{w'}\cD_{w}e - \cD_{[w,w']_{E}}e \>_{E} \\
& + \< \cD_{e_{k}}e, e'\>_{E} \cdot \< \cD_{g_{E}^{-1}(e^{k})}w, w'\>_{E}. 
\end{split}
\end{equation}
The generalized Riemann tensor $R_{HZ}$ satisfies the analogue of Lemma \ref{lem_Riemann}, in particular it is skew-symmetric both in $(e,e')$ and $(w',w)$. Moreover, it has an additional manifest symmetry, namely
\begin{equation}
R_{HZ}(w',w,e,e') = R_{HZ}(e',e,w,w').
\end{equation}
Combined with the skew-symmetries, it gives the interchange symmetry (even for a non-vanishing torsion operator). Moreover, one can show that $\~R$ satisfies the algebraic Bianchi identity which one can write in the form:
\begin{align}
\~R(e,e')e'' + cyclic(e,e',e'') = & \  (\cD_{e}T_{G})(e',e'',e_{\lambda}) \cdot g_{E}^{-1}(e^{\lambda}) - T(e,T(e',e'')) + cyclic(e,e',e'') \nonumber \\ 
& - (\cD_{e_{\lambda}} T_{G})(e,e',e'') \cdot g_{E}^{-1}(e^{\lambda}),
\end{align}
for all $e,e',e'' \in \Gamma(E)$. Recall that $T_{G} \in \Omega^{3}(E)$ is the Gualtieri's torsion defined in Lemma \ref{lem_torsgualt} and related to the torsion operator as $T_{G}(e,e',e'') = \<e'', T(e,e')\>_{E}$. Obviously, for $T = 0$, we get the simple relation $\~R(e,e')e'' + cyclic(e,e',e'') = 0$. 

We can now compare the generalized Riemann tensor $R_{HZ}$ and the curvature operator (\ref{def_Riemann}). 
\begin{tvrz} \label{tvrz_Riemannrel}
Let $(E,\rho,\<\cdot,\cdot\>_{E},[\cdot,\cdot]_{E})$ be the standard exact Courant algebroid with $E = TM \oplus T^{\ast}M$ equipped with the $H$-twisted Dorfman bracket $[\cdot,\cdot]_{D}^{H}$. Let $R$ be the curvature operator (\ref{def_Riemann}), and $R_{HZ}$ the generalized Riemann tensor  (\ref{eq_RHZexplicit}) defined as in \cite{Hohm:2012mf}. Then 
\begin{equation}\label{HZ1}
R_{HZ}(w',w,e,e') = \<w', R(e,e')w \>_{E} + \<e', R(w,w')e\>_{E},
\end{equation} 
for all $e,e',w,w' \in \Gamma(E)$. In particular, if one defines $\Ric_{HZ}(e,e') = R_{HZ}( g_{E}^{-1}(e^{\lambda}),e,e_{\lambda},e')$, one obtains the relation with our Ricci tensor (\ref{def_Ricci}): 
\begin{equation} \label{eq_Ricirel}
\Ric_{HZ}(e,e') = \Ric(e,e') + \Ric(e',e),
\end{equation}
for all sections $e,e' \in \Gamma(E)$. 
\end{tvrz}
\begin{proof}
For $E = TM \oplus T^{\ast}M$, one can choose the local frame $\{e_{\lambda}\}_{\lambda=1}^{2n}$ to be adapted to the canonical splitting of $E$. Then the last term in the definition (\ref{eq_RHZexplicit}) splits into two terms,  each of them correcting the tensoriality of one copy of the naive operators $R^{(0)}$ in (\ref{HZ1}), giving exactly the two copies of $R$ as defined by (\ref{def_Riemann}).
\end{proof}

This proposition makes clear that there is no explicit\footnote{In other words, $B$ appears only as $dB$.} occurence of the $B$-field in the resulting scalar curvatures of Theorem \ref{thm_scalarcurvatures}. Indeed, the main trick there was to use the twisted version $\hcD$ of the connection (\ref{eq_cdandhcd}) to calculate the scalar curvature. The explicit $B$ appeared only in the correcting map $\hfK$ (\ref{eq_hfK}). Such thing does not happen for $R_{HZ}$, as the twisted torsion operator $\widehat{R}_{HZ}$ can be defined entirely in terms of the connection $\hcD$, the twisted Dorfman bracket $[\cdot,\cdot]_{E}^{H+dB}$, and the pairing $\<\cdot,\cdot\>_{E}$. It follows, using similar arguments as in (\ref{eq_scalrstwist}), that the scalar curvature  defined using $R_{HZ}$ can be calculated from $\widehat{R}_{HZ}$ and the block diagonal generalized metric $\G_{E} = \BlockDiag(g,g^{-1})$. Hence, it contains no explicit $B$. Finally, (\ref{eq_Ricirel}) shows that the scalar curvatures calculated from $\Ric_{HZ}$ are just two times our curvatures $\RS$ and $\RS_{E}$, therefore they do not contain an explicit $B$ either. 

Finally a proposition completely similar to \ref{tvrz_Riemannrel} relates the curvature operator (\ref{def_Riemann'}) to (\ref{eq_RHZexplicit}) for heterotic Courant algebroids, and explains the rather mysterious $\frac{1}{2}$ in the definition of the map $\fK'$. 
\section*{Acknowledgement}
It is a pleasure to thank Peter Bouwknegt and Peter Schupp for discussions.
The research of B.J. was supported by grant GA\v CR P201/12/G028, he would like to thank the Max Planck Institute for Mathematics, the Erwin Schrödinger International Institute for Mathematical Physics in Vienna, and the Tohoku Forum for Creativity for hospitality. 
The research of J.V. was supported by RVO: 67985840 and under Australian Research Council's Discovery Projects funding scheme (project numbers DP110100072 and DP150100008), he would like to thank the Mathematical Sciences Institute and the Australian National University for a great hospitality and a very pleasant stay in Australia. 
 \bibliography{bib}
\end{document}